\newtheorem{theorem}{Theorem}
\newtheorem{corollary}{Corollary}[theorem]
\title{A Crystallographic Metric for Continuous Quantification of Unit Cell Deformation}
\author[a]{Shannon Bernier\IUCrCemaillink{sbernier@jhu.edu}\IUCrOrcidlink{0000-0002-1036-6255}}
\author[a]{Gregory Bassen\IUCrEmaillink{gbassen1@jhu.edu}\IUCrOrcidlink{0009-0006-5683-7282}}%
\author[b]{Matthew Brem\IUCrEmaillink{mattbrem15@gmail.com}\IUCrOrcidlink{0009-0005-4505-2810}}%
\author[c]{Davor Tolj\IUCrEmaillink{dtolj1@jhu.edu}\IUCrOrcidlink{0000-0003-3806-1792}}
\author[d]{Quentin Simmons\IUCrEmaillink{nkt2pd@virginia.edu}\IUCrOrcidlink{0009-0002-3091-2222}}
\author[a,c,e]{Tyrel M. McQueen\IUCrCemaillink{mcqueen@jhu.edu}\IUCrOrcidlink{0000-0002-8493-4630}}
\affil[a]{Department of Chemistry, Johns Hopkins University, Baltimore, MD, United States}
\affil[b]{Independent researcher, Frederick, MD, United States}
\affil[c]{Institute for Quantum Matter, William H. Miller III Department of Physics and Astronomy, Johns Hopkins University, Baltimore, MD}
\affil[d]{Department of Physics, University of Virginia, Charlottesville, VA, United States}
\affil[e]{Department of Materials Science and Engineering, Johns Hopkins University, Baltimore, MD, United States}
\begin{document} 
\maketitle 

\begin{synopsis}
We present a simple and easy-to-use crystallographic metric, called the cubic deviation metric, which quantifies the distortion of a unit cell relative to a cubic geometry. The utility of the metric is illustrated with four case studies on pseudobrookites, quaternary homologous series, Wurtzite piezoelectrics, and cuprate superconductors.
\end{synopsis}

\begin{abstract}
Describing the deviation of a real structure from a hypothetical higher-symmetry ideal can be a powerful tool to understand and interpret phase transitions. Here we introduce a simple yet effective metric that quantifies the degree of unit cell distortion relative to a cube, called the cubic deviation metric. This enables continuous comparisons between unit cells of different geometries. We demonstrate the potential of this tool with four separate case study applications to real material systems: 1) discontinuous structural phase transitions in pseudobrookites; 2) homological structure classification; 3) structure-correlated piezoelectricity in hexagonal material; and 4) superconducting materials design in the cuprate family. Although this metric  does not replace detailed structural or group theory analysis, it enables comparison across different compositional and structural compound variants, even in the presence of disorder or absence of group–subgroup correlation. 
\end{abstract}

\keywords{structure comparison; lattice parameters; crystal classes; lattice systems; crystal families; similarity distances}

\section{Introduction}
Understanding solid-state phase transitions is crucial in materials science, as these transitions can significantly influence and modify material properties. Such transitions can be caused by temperature variation as small as a few degrees Celsius or the inclusion of less than a percentage point of a new dopant. In many cases, the change from one structure to another is equally subtle and within the error of laboratory diffraction, the choice of spacegroup is effectively arbitrary. This is a well-recognized problem in the materials science community. It is common to see a material referred to as "pseudocubic",\cite{Sinha2019Introduction/math,KuroiwaPiezoelectricityCations,Zaytseva2024ThreeAnions} "orthorhombically-distrorted"\cite{WangLocalAs}, "emergently tetragonal"\cite{Singh2024EmergentMaterial}, or similar during discussions of this phenomenon. Yet, such terms are imprecise and often carry different meanings for different authors or in different contexts. For example, \cite{Yazawa2021ReducedAl0.7Sc0.3N,Animitsa2009HydrationBa2CaWO6} refer to "tetragonal distortions" in an initially cubic ($a=b=c$) lattice experiencing uniaxial strain which produces lattice parameters $c\neq a=b$ indicative of a tetragonal lattice, while \cite{Singh2024EmergentMaterial} refers to "emergent tetragonality" as behavior indicative of a tetragonal system in an orthorhombic system ($a\neq b \neq c$) under strain but still possessing the lattice parameters of an orthorhombic phase. In a similar vein, the term "pseudocubic" is used to describe tetragonal or orthorhombic,\cite{Zaytseva2024ThreeAnions} monoclinic,\cite{Sinha2019Introduction/math}, and rhombohedral\cite{KuroiwaPiezoelectricityCations} materials.

The common cognition is that one spacegroup or unit cell may be very similar to another yet still distinct, and that in general it should be possible to say that one or another is closer to being "cubic", "tetragonal", or so forth. In the cases where the choice of unit cell is not so clear-cut - not coincidentally often the situations with the richest physics - this logic breaks down. The fundamental reason for this is that all possible unit cells may be represented as distinct "stops" on a continuum ranging from the highest possible symmetry, cubic, to the lowest possible triclinic Bravais lattice (Figure~\ref{fig:Bravais}). A tool for quantifying the stops on this continuum and the distance between them can provide a framework in which to concretely discuss the ambiguous terms mentioned above. The value of general continuous shape descriptors has been recognized for many years,\cite{Zabrodsky1992ContinuousMeasures} but new tools continue to be developed for application in analyses of unit cells, and other crystallographically-relevant shapes.\cite{Tuvi-Arad2024CSMAnalysis,Alon2023RESEARCHStructures,Mosca2020Voronoi-BasedLattices}

Here, we present one such tool which we refer to as the "cubic deviation metric" (CDM), as it classifies all unit cells on a continuum as distortions away from a  mathematically perfect cube. An earlier version of this metric was used in a previous paper by our group.\cite{Bernier2025Symmetry-mediatedPerovskite} Similar unit cell comparison tools exist, but these focus primarily on distinguishing the symmetry of two different spacegroups or sets of atomic positions, with or without a matching Bravais lattice.\cite{Mosca2020Voronoi-BasedLattices,ChisholmCOMPACK:Distances,DeLaFlor2016ComparisonServer} For the purposes of identifying equivalent descriptions of the same arrangement of atoms - a critical step especially as machine learning and artificial intelligence tools become capable of large batch structure prediction\cite{Merchant2023ScalingDiscovery} - these pre-existing tools are well-suited.
\begin{figure}[H]
  \centering
  \includegraphics[width=2.5in]{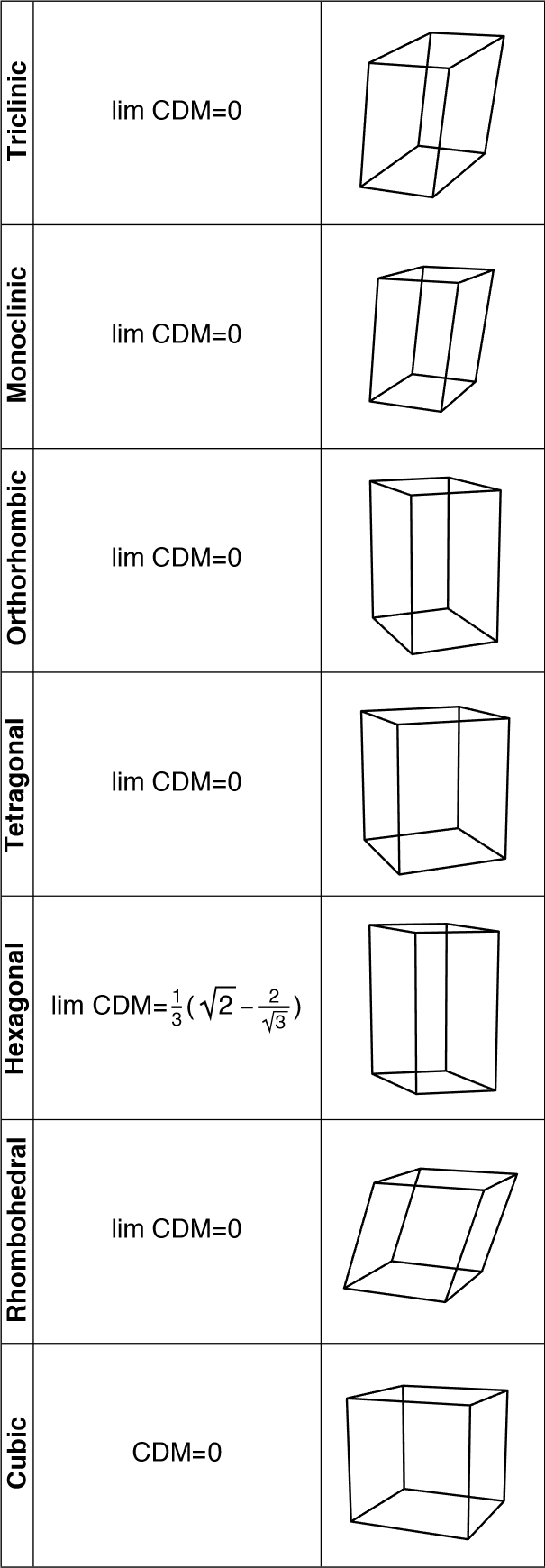}
  \caption{The 7 crystal shapes and their minimum possible cubic deviation metric values. All Bravais lattices may be represented as a mathematical object known as a parallelpiped with different lattices in the same Laue class having different atomic positions.}
  \label{fig:Bravais}
\end{figure}

Our metric, rather than supplanting these tools, aims to supplement by providing an easily-interpretable number describing a unit cell's fundamental shape. It is most similar to tools attempting to form a continuous relation between bonding polyhedra within unit cells such as the minimum bounding ellipsoid approach\cite{Cumby2017ARTICLEPolyhedra} or various algorithmic methods of relating bonding polyhedra to one or more "optimized" shapes\cite{Zabrodsky1992ContinuousMeasures,Alon2023RESEARCHStructures,Pinsky1998ContinuousPolyhedra,ALVAREZ2005ShapeChemistry,Alon2018ImprovedPermutations} by an identification of symmetry. Our metric for unit cells, similar to these tools for shape analysis \textit{inside} a unit cell, attempts to provide one single number describing materials and enable more quantitative comparison between them. While it has some similarities to the structure-specific tolerance factors,\cite{Goldschmidt1926,Mouta2013ToleranceStructures,Song2019ToleranceStructure,Song2020ToleranceStructure,Bassen2024ToleranceSeries,Tschauner2025TowardsPhases} used primarily to predict relationships before synthesis, it is rather designed as a structural analysis tool for use after synthesis, which can also be applied outside of a specific materials family. 

Further, as the materials discovery pipeline increasingly integrates generative machine learning for the prediction and design of candidate structures \cite{wilfong_ternary_2025}, there is a growing need for methods of down-selection to identify promising candidates. Here, we demonstrate a structure-property relationship, namely that the T{\text{c}} of cuprate superconductors exhibits a two-dome dependence on CDM, suggesting that this metric may serve as a valuable heuristic in the design of novel cuprate superconductors—and potentially other material classes as well. 

\section{Methods}
\vspace{-5mm}
\begin{figure}[ht]
\centering
\begin{subfigure}{0.55\textwidth}
  \includegraphics[width = \textwidth]{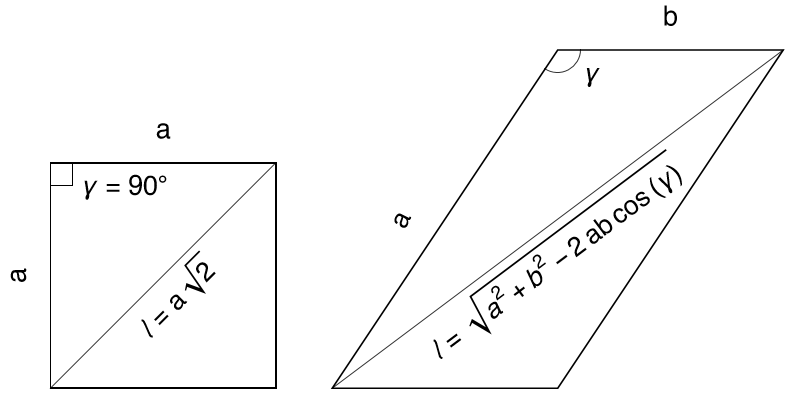}
  \caption{}
  \label{facediag}
\end{subfigure}

  \begin{subfigure}{0.55\textwidth}
    \includegraphics[width = \textwidth]{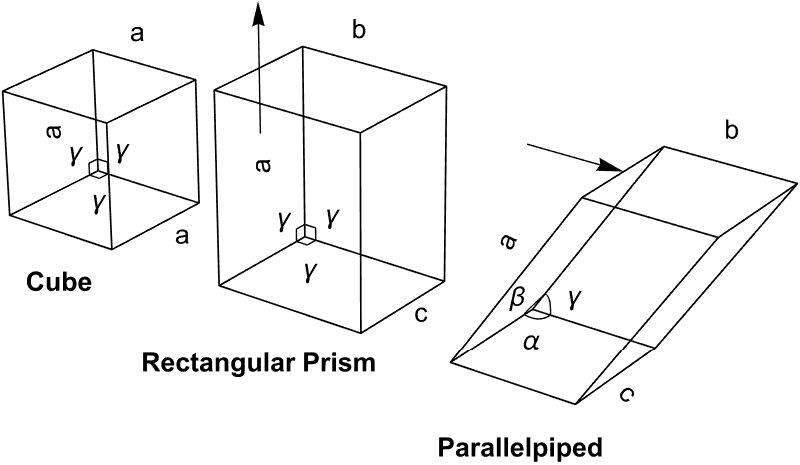}
    \caption{}
    \label{fig:3Dangles}
  \end{subfigure}
  \caption{a) Face diagonals of a parallelogram and a square as defined by Equation~\ref{lawofcosines}. The square is a special case of the more general shape where the face diagonal is equal to $a\sqrt{2}$. b) A series of generic parallelpipeds illustrating the definition of a, b, c, $\alpha$, $\beta$, and $\gamma$ in crystallography and Equation~\ref{eq:Mpoly}.}
  \label{bosscombofig}
\end{figure}
A parallelogram is a two-dimensional mathematical object (or polygon) with two pairs of parallel sides. The internal angles of these four-sided shapes can vary from \ang{0} to \ang{180}. A rectangle is a specific case of a parallelogram where all four angles are \ang{90}, and a square is a yet more specific case with equal angles and side lengths (Figure~\ref{facediag}). Similarly, the three-dimensional analogue of a parallelogram, the parallelpiped, is a generic class of polyhedron with three pairs of parallelograms for faces, Figure~\ref{fig:3Dangles} bottom left. When all internal angles are equal for all faces, the polyhedron is referred to as a right rectangular prism, and when the side lengths are as well, the shape becomes a cube. As will become important in a moment, every face of a cube is a square. Although some faces of other parallelpipeds may be squares, only a cube will have exclusively square faces.

All Bravais lattices are parallelpipeds with varying atom positions, as shown in Figure~\ref{fig:Bravais}. We are thus able to rely on well-established geometrical proofs for parallelpipeds to develop our "cubic deviation metric". The problem of distinguishing between a generic parallelpiped (P) and a cube may be reduced to checking if the three unique sides of P are squares (Theorem~\ref{thm:squares}). There are many properties of a parallelogram which may vary from a square, but not all will be sufficient to conclusively identify a square. For example, one can test for a rectangle by virtue of its unequal side lengths, but a rhombus (having equal side lengths but angles $\neq$ \ang{90}) will also pass this test. As will be shown below, all six lattice parameters will be required to distinguish cubic from non-cubic unit cells. We will additionally require other criteria to ensure maximum applicability to chemical systems.
\subsection{Metric requirements} \label{sec:MetReqs}
For a cubic deviation metric CDM which is both mathematically rigorous and chemically applicable, we identify the following six criteria:
\begin{enumerate}
    \item CDM must take into account all six lattice parameters - the three side lengths $a,b,c$ and three angles $\alpha,\beta,\gamma$ which describe the unit cell - and not include other chemical information (\textit{e.g.} atomic positions, radii, or stoichiometry).
    \item Since doubling a unit cell does not change its shape, CDM must be normalized such that two unit cells with the same ratio of angles and lengths give the same value. 
    \begin{itemize}
        \item This requirement has the added benefit of producing a metric which is unitless, enabling CDM to be readily compared between materials families.
    \end{itemize}
    \item CDM should not depend on the choice of crystallographic directions and should produce the same value if, \textit{e.g.} $a$ and $b$ are swapped for some given unit cell.   
    \item CDM must equal a specific value if and only if the unit cell is cubic. There must not be any cases where a non-cubic unit cell produces this value.
    \begin{itemize}
        \item Here, we have chosen CDM $= 0$ as the value for cubic unit cells, meaning values greater than 0 will indicate greater distortion from a perfect cube.
        \item For clarity only, we have also chosen to fix the maximum of CDM at 1.
    \end{itemize}
    \item Finally, CDM should display an obvious trend as a unit cell moves “further from a cube” by varying just one parameter, although a trend may not be obvious when varying multiple parameters at once (this is the problem our metric is being developed to solve).
\end{enumerate}

The simplest metric meeting these criteria arises through the comparison of face diagonals, which are defined by the Law of Cosines (Equation \ref{lawofcosines}). The inputs to the Law of Cosines are the lattice parameters only, satisfying requirement~1. We begin with a "square deviation metric" for a single face, $M_{face}$ (Figure~\ref{facediag}). In this simple 2D case, there is only one unique face and thus we need two terms for the two diagonals in this single face. In higher dimensionalities, there will be two terms for each of $N$ unique faces (Equation~\ref{eq:NormCondish}), so we will divide our final result by $N$ in order to satisfy both parts of requirement~4. In this trivial case, $N=1$ and the maximum value is unaffected. 

In all squares (and only squares, as shown in Theorem~\ref{thm:squares} in the Appendix) the ratio of side length to face diagonal is $\frac{a}{a\sqrt{2}}=\frac{1}{\sqrt{2}}$. By comparing ratios rather than lengths directly, we satisfy requirement 2 above. Subtracting the idealized value from the ratio and taking the absolute value produces a number which is larger when the deviation is larger. This allows us to distinguish between squares and parallelograms and satisfy requirement~5.
\begin{multicols}{2}
\begin{equation}
    l=\sqrt{a^2+b^2-2ab\cos{(\gamma)}}
    \label{lawofcosines}
\end{equation}
\begin{equation}
     N(D)=\frac{D!}{2(D-2)!}
     \label{eq:NormCondish}
\end{equation}
\end{multicols}

The difference of this ratio can uniquely identify squares. However, the formalism does not yet satisfy requirement~3 because the functional form $\left|\frac{1}{\sqrt{1-\cos\theta}}\right|$ is asymmetric, with a different behavior on either side of a cusp at $\theta=90\degree$. When $\theta<90\degree$, its value increases rapidly; when $\theta>90\degree$, its value increases much more slowly (see Figure~\ref{fig:piecewise}). There are two unique angles in each parallelogram, both of which are free to vary from \ang{0}-\ang{180} with the sum equal to \ang{180}. To address requirement~3, one can either restrict the inputs of the metric to only angles between \ang{90} and \ang{180}, requiring crystallographers to always select the larger of the two angles between a pair of sides when applying the metric, or equivalently define the function piecewise as in Equation~\ref{eq:PiecewiseFace}. The benefit of the piecewise definition is that one may refer to a fixed angle at all times and analyze the cubic/square deviation as a function of a single angle.
\begin{equation}
m=
\begin{cases}
    \sqrt{a^2+b^2-2ab\cos{(\gamma-180\degree)}} &  \gamma < 90\degree\\
    \sqrt{a^2+b^2-2ab\cos{(\gamma)}} &  \gamma  \geq 90\degree
\end{cases}
    \label{eq:PiecewiseFace}
\end{equation}
\begin{equation}
     \mathcal{M}_\text{face}= \frac{1}{N}\left(\left|\ \frac{a}{m}-\frac{1}{\sqrt{2}} \right|\ + \left|\ \frac{b}{m}-\frac{1}{\sqrt{2}} \right|\right)\ =\left|\ \frac{a}{m}-\frac{1}{\sqrt{2}} \right|\ + \left|\ \frac{b}{m}-\frac{1}{\sqrt{2}} \right|
     \label{Mface}
\end{equation}
We are thus able to derive Equation \ref{Mface} which has a term for each face diagonal in the shape and satisfies all five of our requirements. If the parallelogram is a square, ${M}_{face}=0$. Next, we extend this metric into three dimensions. The six face diagonals $l$ of a parallelpiped are equal across all faces only when every face is a square - in other words, when the parallelepiped is a cube. Equation~\ref{eq:NormCondish} gives the number of unique faces as three. Summing $\mathcal{M}_\text{face}$ over each unique face and dividing by $N=3$ produces the six-term metric $\mathcal{M}_\text{poly}$ for the three-dimensional parallelpiped shown in Figure~\ref{fig:3Dangles}, hereafter referred to as CDM for brevity.

The resulting "cubic deviation metric" CDM varies between 0 and 1, with a value of 0 indicating the shape is a cube. It relies only on lattice parameters, requirement 1. Rigorous mathematical proofs that this metric meets requirements 2-4 are presented in the Appendix. It should be noted that the metric is not linear with any one lattice parameter and that there may be multiple structures with the same value, including theoretically "maximally deviated" structures with value of CDM $=1$. However, it still follows an obvious trend from "less cubic" to "more cubic" when varying a single parameter, meeting requirement 5. The method of construction holds for objects of any dimensionality; for example, tesseracts (4D hypercubes) will require a total of 12 terms for face diagonals with $N=6$.

\begin{equation}\label{eq:Mpoly}
\begin{split}
  \mathcal{M}_\text{poly}&= \frac{1}{N}(\mathcal{M}_{\text{face}_{ab}}+ \mathcal{M}_{\text{face}_{ac}}+ \mathcal{M}_{\text{face}_{bc}})\\
&=\frac{1}{3}\begin{cases}
      \left|\ \frac{a}{\sqrt{a^2+b^2-2ab\cos{(180\degree-\gamma)}}}-\frac{1}{\sqrt{2}} \right|\ + \left|\  \frac{b}{\sqrt{a^2+b^2-2ab\cos{(180\degree-\gamma)}}}-\frac{1}{\sqrt{2}} \right|,\  & \gamma <90\degree\\
      \left|\ \frac{a}{\sqrt{a^2+b^2-2ab\cos{(\gamma)}}}-\frac{1}{\sqrt{2}} \right|\ + \left|\  \frac{b}{\sqrt{a^2+b^2-2ab\cos{(\gamma)}}}-\frac{1}{\sqrt{2}} \right|,\  & \gamma \geq90\degree\end{cases}
      \\ 
     &+ \frac{1}{3}\begin{cases}
      \left|\ \frac{c}{\sqrt{a^2+c^2-2ac\cos{(180\degree-\beta)}}}-\frac{1}{\sqrt{2}} \right|\ +
    \left|\ \frac{a}{\sqrt{a^2+c^2-2ac\cos{(180\degree-\beta)}}}-\frac{1}{\sqrt{2}} \right|,\ & \beta <90\degree\\ 
    \left|\ \frac{c}{\sqrt{a^2+c^2-2ac\cos{(\beta)}}}-\frac{1}{\sqrt{2}} \right|\ +
    \left|\ \frac{a}{\sqrt{a^2+c^2-2ac\cos{(\beta)}}}-\frac{1}{\sqrt{2}} \right|,\ &\beta \geq90\degree\\
    \end{cases}
    \\
   &+ \frac{1}{3}\begin{cases}
    \left|\ \frac{b}{\sqrt{b^2+c^2-2bc\cos{(180\degree-\alpha)}}}-\frac{1}{\sqrt{2}} \right|\ +
    \left|\ \frac{c}{\sqrt{b^2+c^2-2bc\cos{(180\degree-\alpha)}}}-\frac{1}{\sqrt{2}} \right|,\ & \alpha <90\degree\\
    \left|\ \frac{b}{\sqrt{b^2+c^2-2bc\cos{(\alpha)}}}-\frac{1}{\sqrt{2}} \right|\ +
    \left|\ \frac{c}{\sqrt{b^2+c^2-2bc\cos{(\alpha)}}}-\frac{1}{\sqrt{2}} \right|,\ & \alpha \geq90\degree\\
    \end{cases}
    \end{split}
\end{equation}

\subsection{Note on choice of unit cell and origin}
We now take a moment to discuss the limits of applicability of the CDM, which will be further explored in our case studies below. First, we comment on the hexagonal lattice, the crystallographic descriptor which looks most different from a cube in its most common description. As illustrated in Figure~\ref{fig:Bravais} and Figure~\ref{fig:HexRhomb}, although conventionally drawn as a hexagonal prism rather than a parallelpiped, the hexagonal unit cell is actually in the shape of a parallelpiped with the atomic positions possessing $C_3$ symmetry about the $c$ axis. When considering shape only, and not atomic positions, even this nominally disparate case can be placed in the middle of a continuum relating its shape to the cubic lattice. It is the atomic positions which make a hexagonal description useful for understanding the bonding in solids with these space groups, but as the CDM is agnostic to atomic position, there is nothing "special" about this particular lattice compared to the others. Similarly, other Bravais lattices in a crystal system produce equal CDM values when their lattice parameters are equal, regardless of centering. For example, face centered cubic, body centered cubic, and simple cubic lattices all have CDM=0 and hP and hR hegaonal lattices have equal nonzero values for the same lattice parameters.
\begin{figure}[ht]
\centering
  \includegraphics[width=0.3\textwidth]{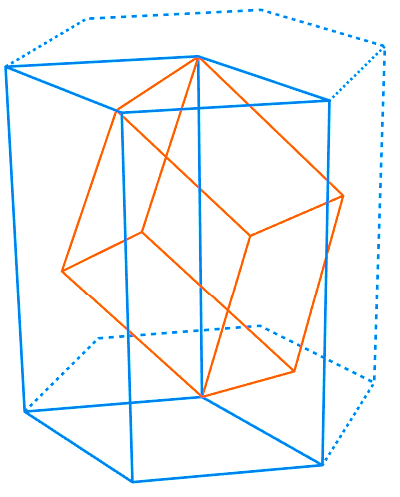}
  \caption{The location of the rhombohedral unit cell (orange) definition relative to the definition of the hexagonal unit cell (blue). The dashed blue region depicts the conventional representation of the hexagonal lattice.}
  \label{fig:HexRhomb}
\end{figure}

In the trigonal system, a hexagonal unit cell may be redefined as a rhombohedron with the directions of the crystal axes rotated as shown in Figure~\ref{fig:HexRhomb} using the relations in Equation~\ref{eq:HexToRhomb}. By design, CDM is agnostic to unit cell direction and thus may be applied to either setting of this lattice. In the hexagonal setting, the parallelpiped of interest has $\alpha=\beta=$ \ang{90}, $\gamma=$ \ang{120}, and $a = b$, with $c$ as a free parameter. CDM can be simplified for this lattice (Equation~\ref{eq:Mhex}) with a minimum possible value of $\approx$ 0.0809 when $c=a$, Equation~\ref{eq:Mhexmin}. There is no restriction on the values of $a$ and $b$ in this lattice, so although it is possible to reach this value  no amount of distortion can produce a lower value. Similar simplification of CDM is possible for the rhombohedral setting (Equation~\ref{eq:Mrhomb}). By definition, $\alpha,~\beta,~\gamma \neq 90\degree$ so the rhombohedral lattice's limit of CDM $=0$ (Equation~\ref{eq:Mrhombmin}) is never reached (without changing to the cubic lattice), but this description can produce smaller values of CDM than the hexagonal spacegroups.
\begin{equation}
    \lim_{c\to a}\text{CDM}_{\text{hexagonal}}=\frac{2}{3}\left|\ \frac{1}{\sqrt{3}}-\frac{1}{\sqrt{2}} \right|\ = \frac{1}{3}\left(\sqrt{2}-\frac{2}{\sqrt{3}}\right)\approx0.0809
    \label{eq:Mhexmin}
\end{equation}
\begin{equation}
    \lim_{\alpha\to 90\degree}\text{CDM}_{\text{rhombohedral}}=2\left|\ \frac{1}{\sqrt{2-2(0)}}-\frac{1}{\sqrt{2}} \right|\ = 0
    \label{eq:Mrhombmin}
\end{equation}

It is possible to convert from the hexagonal lattice to the rhombohedral, however, in the case where $a_{\text{h}}=c_{\text{h}}$ - the maximum possible "cubicity" of the hexagonal lattice - substituting the appropriate relations into the metric shows that $\alpha_{\text{r}}\approx97.2\degree$ producing CDM $\approx0.0865$. In other words, as one might naively expect, redefining a noncubic hexagonal lattice as rhombohedral does not produce a cubic unit cell. Somewhat unintuitively, however, there is a slight difference in the value of CDM for different descriptions of the same arrangement of atoms. The same is also true when describing a single unit cell versus a supercell with different symmetry or other choices where the symmetry operations change but the bonding and atomic positions remain constant. For this reason, care should be taken when comparing CDM values across different definitions of the unit cell; when the crystallographic directions defining $a,~b,~c$ change, there may be a discontinuity in the metric. This is not a failure of the metric, but rather a reminder that the unit cell formalism is an attempt to describe a high-symmetry arrangement of atoms in space and may not be a unique description.

To summarize, the minimum possible values of CDM for each crystal system are tabulated in Figure~\ref{fig:Bravais}. In real materials, the unit cell is selected based on observed atomic positions. Between two unit cells, equal lattice parameters will produce equal CDM values even when atomic positions vary significantly. In contrast, different axes choices when making this selection can affect the possible values of CDM. Axis and origin choices, supercells, and multiple coexisting unit cells (\textit{i.e.} multiple phases) should be considered when attempting to interpret differences in CDM between materials.

\section{Case studies}
CDM can be applied in a number of useful ways and is especially helpful when multiple lattice parameters are changing simultaneously. Here, we present four case studies of the metric applied to real materials data, highlighting its features and utility in materials problems.

\subsection{Metric behavior during phase transitions}\label{sec:1}
Pseudobrookite (Fe\textsubscript{2}TiO\textsubscript{5}) was first identified in 1878, and its crystal structure resolved in 1930.\cite{Koch1878,Pauling1930} The most extensively investigated orthorhombic pseudobrookites have compositions of the type M\textsuperscript{3+}\textsubscript{2}Ti\textsuperscript{4+}O\textsubscript{5} (M = Sc, Cr, Fe, Ti, Ga, Al) or M\textsuperscript{2+}Ti\textsuperscript{4+}\textsubscript{2}O\textsubscript{5} (M = Mg, Fe, Co).\cite{Tiedemann1982,Xirouchakis2007} In these structures, each iron-centered octahedron shares one edge with another iron-centered octahedron and three edges with titanium-centered octahedra. Conversely, each titanium-centered octahedron shares all six of its edges with iron-centered octahedra (Figure~\ref{fig:PBabc}a). The resulting network forms c-axis-oriented double chains of distorted octahedra, which are weakly bonded through shared edges. Unusual thermal expansion anisotropy in these systems and complex magnetic behavior has been strongly linked to structural properties.\cite{Bayer1971,Lang2019} 

Sustained interest in this class of compounds has generated an extensive body of literature and structural data. However, variability in sample preparation methods, thermal treatments, structural classifications (e.g. \textit{Cmcm}/\textit{Bbmm}), and structural refinement approaches has led to considerable discrepancies in reported lattice parameters, even among compounds with nominally identical compositions. Furthermore, standard conventions for labeling lattice parameters in different space groups can complicate comparisons between compounds. For instance, the shortest lattice parameter in the \textit{Cmcm} phase is unit cell length $a$, while the corresponding parameter in \textit{C2/m} is $b$. Among pseudobrookites, aluminum titanates (Al\textsubscript{2-x}Ti\textsubscript{1+x}O\textsubscript5) are the most widely applied and one of the best-characterized members of the family.\cite{Goldberg1968} These compounds can serve as ideal case studies for investigating the structural evolution using a cubic deviation metric, both as a function of temperature and composition.

\begin{figure}[ht]
\centering
\includegraphics[width=300pt]{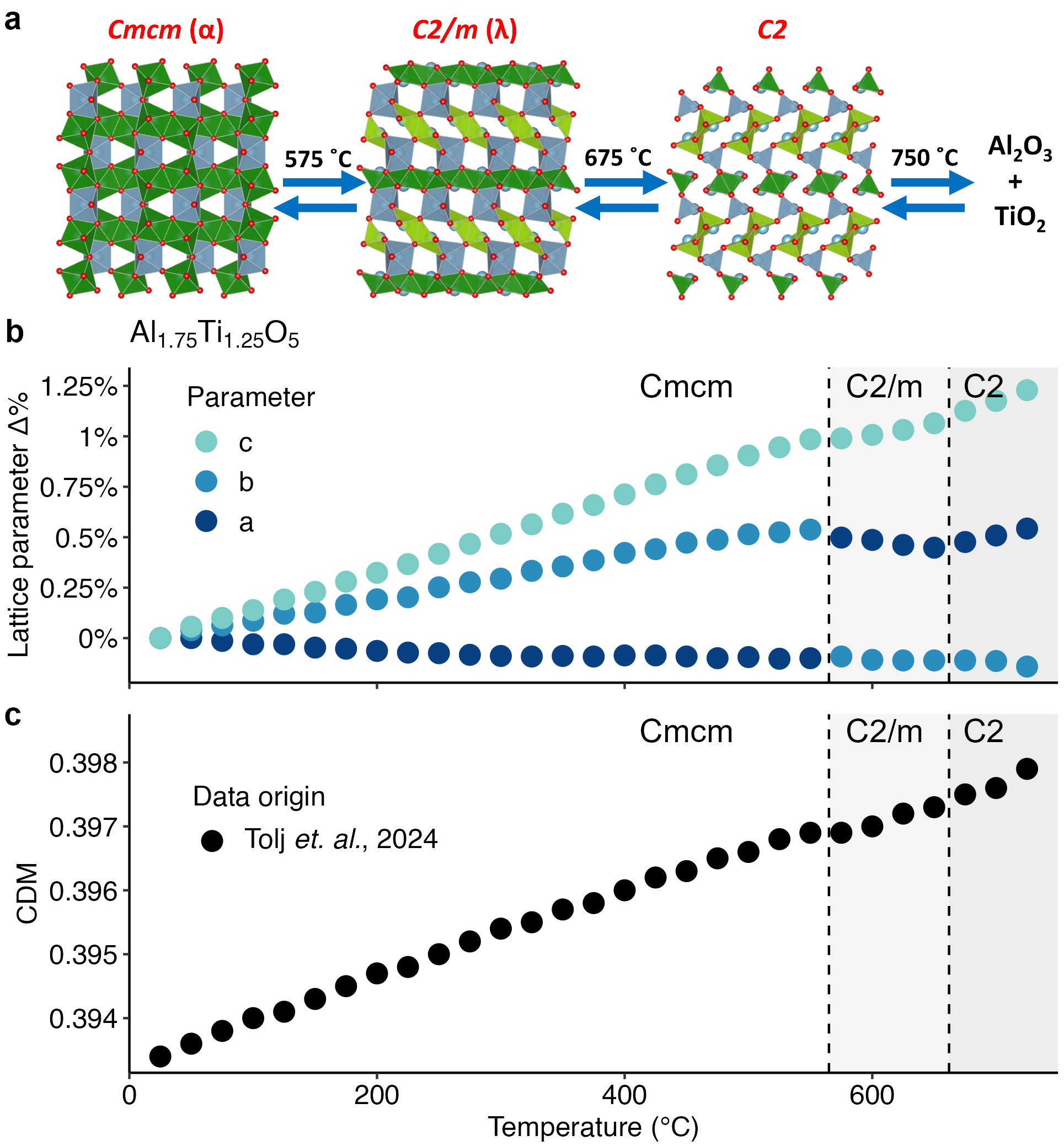}
\caption{a) 3D packing view of Al\textsubscript{1.75}Ti\textsubscript{1.25}O\textsubscript{5} structure at different temperatures. Temperature dependence of a change in: b) lattice parameters and c) cubic deviation metric. Data taken from \protect\cite{Tolj2024High-TemperaturePseudobrookite}.}
\label{fig:PBabc}
\end{figure}

In temperature driven structural evolution, exemplified by Al\textsubscript{1.75}Ti\textsubscript{1.25}O\textsubscript{5} (Figure~\ref{fig:PBabc}), the phase change from orthorombic \textit{Cmcm} to monoclinic \textit{C2/m} is associated with a gradual transformation in the coordination environment of one metal site from a distorted six-fold octahedron at room temperature to a five-fold square pyramidal geometry at 550 \degree C, Figure~\ref{fig:PBabc}a.\cite{Tolj2024High-TemperaturePseudobrookite} Additionally, there is a reduction in Al and Ti coordination numbers followed by the emergence of non-random cation distributions. Further increase in temperature induces additional structural distortion, leading to the symmetry reduction to the monoclinic \textit{C2} space group at 625 \degree C, and eventual thermal decomposition into Al\textsubscript{2}O\textsubscript{3} and TiO\textsubscript{2} above 750 \degree C. 

This transition may be followed by observing the differences in lattice parameters plotted in Figure~\ref{fig:PBabc}b, but it is complicated by the difference in definition of the $a$ and $b$ parameter between the \textit{Cmcm} and \textit{C2/m} spacegroups and anisotropic thermal expansion of the latice. Alternatively, we may apply CDM (Figure~\ref{fig:PBabc}c) and observe the general trend without this confusion. Both symmetry changes are second-order transitions and manifest as discontinuities in the CDM. With a numerical tool to observe these transitions, we can also compare the behavior of various pseudobrookite compositional variants as shown in Figure~\ref{fig:PBextradata} and observe a decrease in cubicity with increasing temperature in several structural analogues. More temperature data for these families could potentially reveal additional transitions.
\begin{figure}[ht]
\includegraphics[width=0.95\textwidth]{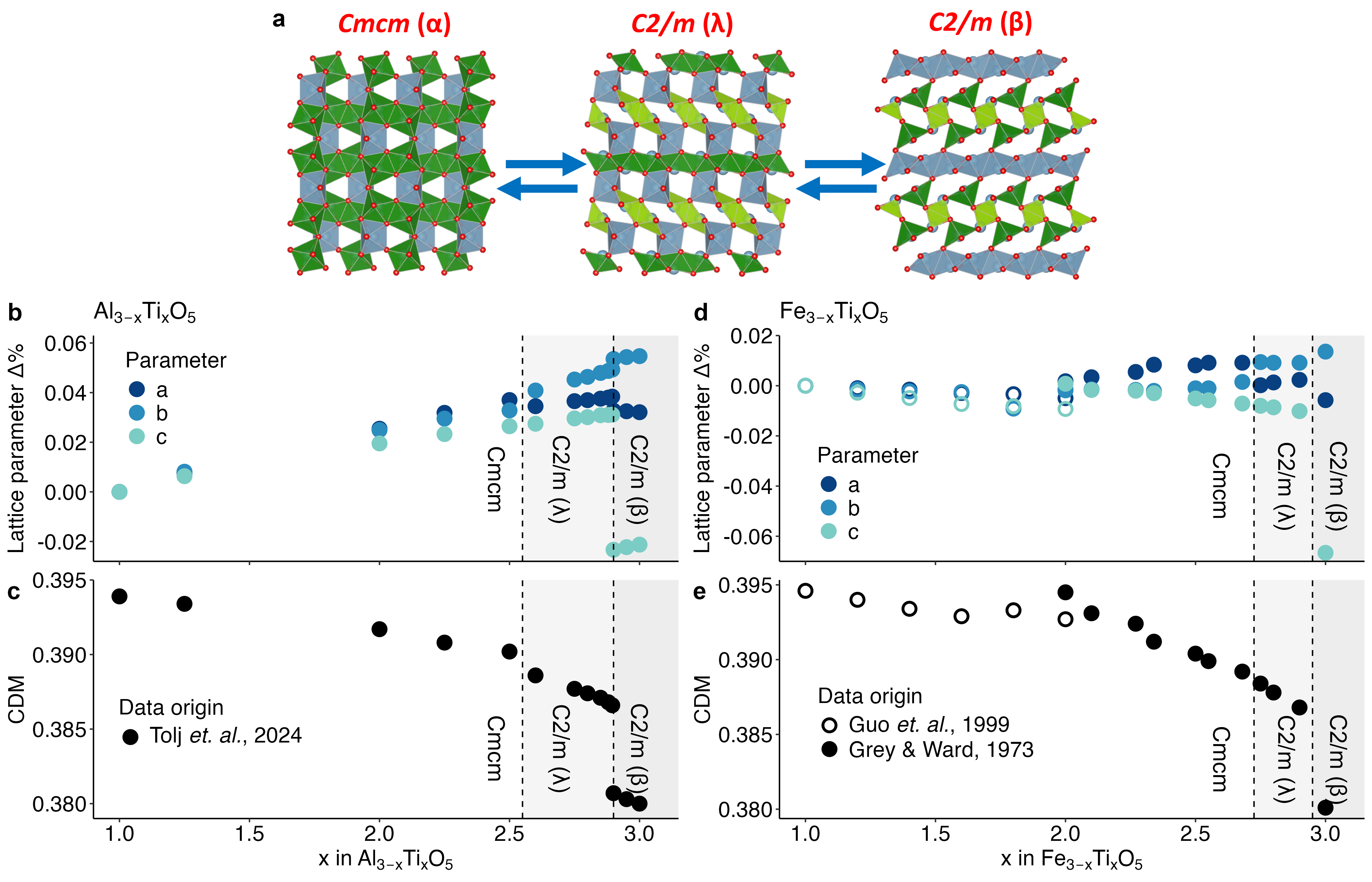}
\caption{M\textsubscript{3-x}Ti\textsubscript{x}O\textsubscript{5} composition range. a) 3D packing view of three phases present. Ti content dependence of a change in: Al\textsubscript{3-x}Ti\textsubscript{x}O\textsubscript{5} b) lattice parameters, c) cubic deviation metric; Fe\textsubscript{3-x}Ti\textsubscript{x}O\textsubscript{5} (open and full markers indicate two different data sources) d) lattice parameters, e) cubic deviation metric. Data from \protect\cite{Tolj2024High-TemperaturePseudobrookite,Guo1999,Grey1973}, as indicated in the figures.}
\label{fig:PBcomp}
\end{figure}

In contrast, composition-driven transitions in titanium-rich systems are governed by the formation and coupling of nonmagnetic spin-singlet Ti\textsuperscript{3+}-Ti\textsuperscript{3+} dimers with the lattice.\cite{Takahama2020} An initial increase in Ti content from Al\textsubscript{2}\textsuperscript{3+}Ti\textsuperscript{4+}O\textsubscript{5} leads to a paramagnetic, mixed-valent titanium state. Further increase in Ti content enhances dimer correlations, reducing the number of unpaired Ti\textsuperscript{3+} ions and driving a second-order phase transition, from orthorhombic \textit{Cmcm} ($\alpha$ phase) to a lower-symmetry monoclinic \textit{C2/m} ($\lambda$ phase). The charge ordering of Ti\textsuperscript{3+} dimers and Ti\textsuperscript{4+} in compounds close to Ti\textsubscript{3}O\textsubscript{5} leads to a first-order phase transition to \textit{C2/m} ($\beta$ phase) accompanied by abrupt changes in the lattice parameters and CDM.

We may again observe this transition by plotting CDM as in Figure~\ref{fig:PBcomp} c,e. Interestingly, CDM values decrease as we approach phase transitions with increasing Ti content. Additionally, we can also observe a sharp discontinuity in the data as the bonding changes from M\textsubscript{3-x}Ti\textsubscript{x}O\textsubscript{5} to Ti\textsubscript{3}O\textsubscript{5}, indicating first order transition. Both transitions occur at  slightly different points in the Al and Fe compounds due to the interaction between magnetic iron and titanium dimers, with the transition to the \textit{C2/m} ($\beta$ phase) occurring only after complete substitution to Ti\textsubscript{3}O\textsubscript{5}. CDM offers much clearer overview compared to an analysis of lattice parameters independently, especially in case of Fe\textsubscript{3-x}Ti\textsubscript{x}O\textsubscript{5} in which there are two compositional data sources, as no study has reported the full range data.\cite{Guo1999,Grey1973}

Despite the difference in origin, in both the temperature and compositional variation studies, the evolution of the structure shows a phase transition from the orthorhombic \textit{Cmcm} ($\alpha$ phase) to a lower-symmetry monoclinic \textit{C2/m} ($\lambda$ phase). On its own, this might suggest a common underlying mechanism. However, the cubic deviation metric CDM reveals distinct differences in behavior without the need for in-depth analysis presented here. Increasing temperature in Al\textsubscript{1.25}Ti\textsubscript{1.75}O\textsubscript{5} leads to an increase of CDM values, indicating distortion away from cubicity, while increase in Ti content in M\textsubscript{3-x}Ti\textsubscript{x}O\textsubscript{5} to Ti\textsubscript{3}O\textsubscript{5} leads to increase in lattice cubicty. This contrasting behavior highlights the utility of CDM, especially in systems with thermal expansion anisotropy (such as pseudobrookites) where it would not be possible to observe such trends simply from the change in lattice parameters. Although this analysis does not replace detailed structural or group theory analyses, it offers a simple yet effective tool to compare and classify phase transitions across large families of compounds. CDM allows for direct comparison across multiple data sources of nominally the same compound family but with significant variation in the reported lattice parameters due to synthesis or structural refinement methods. Notably, it also facilitates comparison across different compositional and structural variants, even in the presence of disorder or the absence of group–subgroup relationships. This singular parameter, based just on the lattice parameters, allows for a quick identification of interesting points in otherwise complex phase diagrams without the need for in-depth analysis.

\subsection{Comparing tolerance factors and the cubic deviation metric}\label{sec:2}
There is a subtle yet important distinction between what is known as a "tolerance factor" and a metric such as the one presented here. Tolerance factors - such as the Goldschmidt tolerance factor for perovskites \cite{Goldschmidt1926} and the Quaternary Tolerance Factor (QTF) developed for quaternary homologous chalcogenides \cite{Bassen2024ToleranceSeries}- are generally designed to be used before synthesis as a predictor of stability or structure type through geometric sphere packing. This \textit{a priori} stability prediction approach, or more specifically, structurally agnostic approach, uses experimentally tabulated values of ionic radii, such as the Shannon radii \cite{Shannon1976}. Such tools may be developed through geometric considerations to fit observed experimental trends in known materials. They are empirical devices which often work quite well within a particular material family, and with refinements may sometimes also be applied to similar compounds for which they were not initially developed. \cite{Bassen2024ToleranceSeries,Teraoka1998,Bartel2019,Sun2020StructuralDesign,Kieslich2015AnPerovskites,Sato2016ExtendingOPEN,CaiTheStructure}

\begin{figure}[ht]
  \centering
  \includegraphics[width = 330pt]{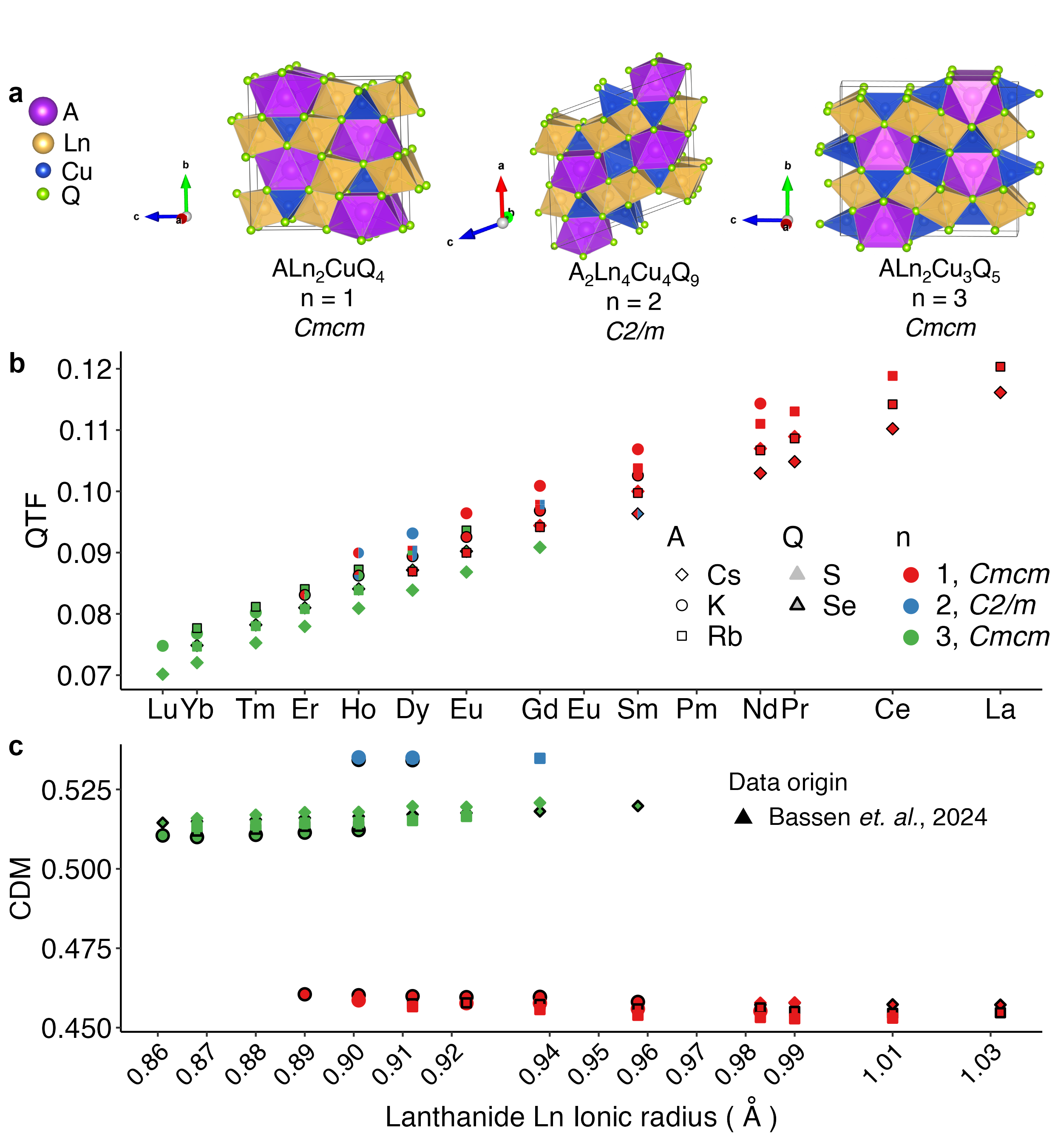}
  \caption{a) Structures n~=~1,~2 and 3 for the A\textsubscript{2}Ln\textsubscript{4}Cu\textsubscript{2n}Q\textsubscript{7+n} homologous series. b) Plot of tolerance factor (QTF) for homology prediction across series c) Cubic deviation metric CDM for structure comparison across series. Bolded points Q = Se; unbolded Q = S. Colors indicate different homologies and shapes indicate the different A-site elements. All structural information is discussed in this work.\protect\cite{Bassen2024ToleranceSeries}}
  \label{QTF}
\end{figure}

Homologous series are excellent platforms for describing structural evolution across variations of composition space. As a case study, we will take the quaternary homologous series A\textsubscript{2}Ln\textsubscript{4}Cu\textsubscript{2n}Q\textsubscript{7+n}, where interstitial Cu\textsubscript{2}Q groups are added upon increasing homology number (where A = alkali cation, Ln = lanthanide cation, Q = chalcogenide anion, for $n = 1, 2,  3$) \cite{Bassen2024ToleranceSeries}. The n~=~1 and ~3 homologies crystallize in the orthorhombic \textit{Cmcm} space gorup, and the n~=~2 homology takes the monoclinic \textit{C2/m} space group as shown in Figure~\ref{QTF}a. Using this series as a case study, we can compare the two primary approaches employed for understanding structural evolution, namely pre-synthesis methods such as tolerance factors and post-synthesis methods like CDM, to expound on the complementary nature between them. Two fundamental questions capture these approaches: 1) How does the homology number change as a function of composition? and 2) How does the unit cell change as a function of homology number? In previous work,\cite{Bassen2024ToleranceSeries} we sought to answer the first question and developed the QTF (Equation~\ref{eq:QTF}) to predict the synthesizability of homology formation (n~=1,~2,~3) within this series using atomic radii.

\begin{equation}
   \mathcal{QTF}=\frac{r_{\textit{medium cation}}-r_{\textit{smallest cation}}}{r_{\textit{largest cation}}+r_{\textit{anion}}}
    \label{eq:QTF}
\end{equation}  

Here, the large, medium, and small cations correspond to the alkali, lanthanide, and copper radii, respectively, and the anion corresponds to the radius of the chalcogenide. As shown in Figure~\ref{QTF}b, the QTF is successful in predicting regions of homology favorability, wherein the n~=~1 is favored at higher QTF and the n~=~3 at lower QTF, with a mixed-phase region in between where all homologies are favorable (note that some compositions, such as K-Ho-Cu-Se can make all three homologies). We observe that the homology formation can be described and predicted as a function of constituent atomic radii. However, the QTF, like all structural tolerance factors, is agnostic to changes in the unit cell's shape, treating the problem purely as that of sphere packing. Therefore, question 2 above cannot be answered using the QTF as it requires post-synthetic knowledge of the resulting compositions and crystal structures. 

Upon determination of the crystallographic structure, the unit cell parameters can be obtained and CDM calculated. Now, the homologies can be sorted cleanly and the structures can be more quantitatively understood. With the CDM's built-in normalization condition, it becomes clear how each homology takes on a unique structure. Figure~\ref{QTF}c illustrates the clustering of n~=~1,~2 and 3 homologies from each other. The n=1 homology has the lowest CDM $\approx0.46$, indicating higher unit cell symmetry than the other homologies. This lower CDM value can be understood by analyzing the local symmetry of the pentagonal slice of the bicapped trigonal prismatic alkali (Figure~\ref{QTF}a). Four of the edges are shared with the rare-earth octahedra, with one edge shared with the copper tetrahedra. However, in the case of the n~=~2 homology, the addition of Cu\textsubscript{2}Q replaces one of the side edge-sharing rare-earth octahedra with two edge-sharing copper tetrahedra, resulting in the loss of mirror symmetry around this central alkali and the overall lower symmetry to the monoclinic \textit{C2/m} space group. The cubic deviation metric for n~=~2 members is CDM $\approx.54$, a distortion of about $\approx+0.08$ from n~=~1.
 
 Finally, for the n~=~3, the further addition of interstitial Cu\textsubscript{2}Q replaces the remaining edge sharing rare earth octahedra along the $c$ axis, restoring mirror symmetry around the alkali, resulting in lower values relative to the n~=~2 of CDM $=.515$. However, the metric value is still larger than those of n~=~1 by about $0.055$. This discrepancy highlights the utility of the cubic deviation metric; although both n~=~1 and n~=~3 homologies take the orthorhombic \textit{Cmcm} space group, the two additional Cu\textsubscript{2}Q units in n~=~3 distorts the structure, such that its cubic deviation is more similar to n~=~2 than n~=~1. This approach to quantifying unit cell deviation allows for the disentangling of structural differences between homologies of shared space groups. Together with the QTF, CDM illustrates the complementary nature of analyzing structural evolution through both pre- and post-synthetic methods, whereby homologies can be predicted before synthesis and structurally compared afterward. 

\subsection{Structure-property relationships with multiple ways of describing a unit cell}\label{sec:3}
\begin{figure}[ht]
  \centering
  \includegraphics[width = 275pt]{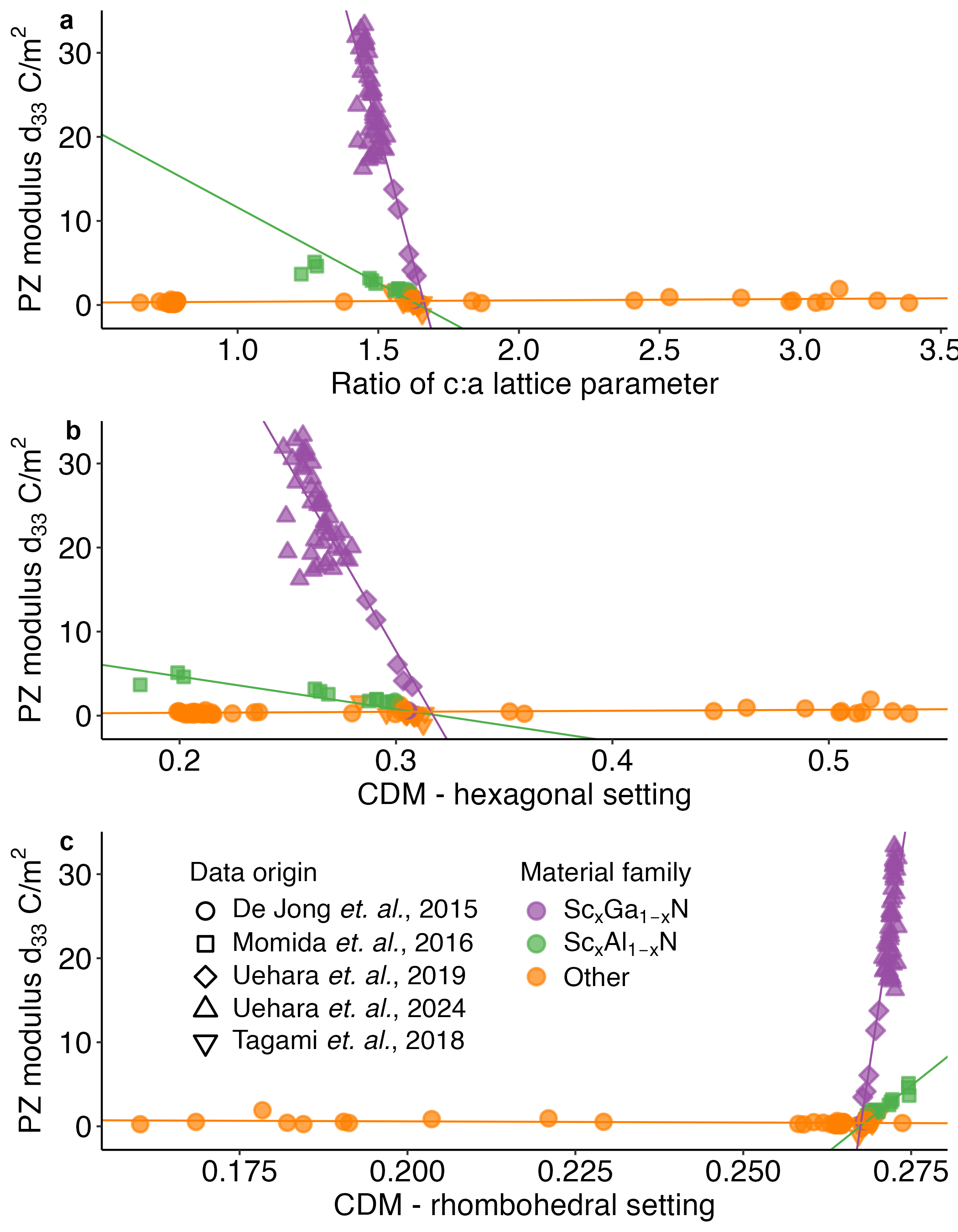}
  \caption{Piezoelectric modulus $d_{33}$ for spacegroup \textit{P63mc} materials as a function of a) $c:a$ ratio in the hexagonal setting (reported data), b) CDM in the hexagonal setting, and c) CDM in the rhombohedral setting (calculated). Lines of best fit indicate trends within materials families. Data is reproduced from \protect\cite{DeJong2015ASummary,Momida2016StrongCalculations,Uehara2019IncreaseMEMS,Uehara2024ExcellentConcentration,Tagami2018ToOguchi}.}
  \label{fig:Piezoca}
\end{figure}
One of the most obvious applications for a structural metric is in analysis of a property with a clear dependence on structure, such as piezomagnetism or piezoelectricity. For example, the Wurtzite-structure family of binary (AX) piezoelectrics has already been shown both experimentally\cite{Uehara2019IncreaseMEMS,Yazawa2021ReducedAl0.7Sc0.3N,Yasuoka2022TunableAnisotropy,Uehara2024ExcellentConcentration,Ota2025ImpactFilms} and computationally\cite{Jain2013Commentary:Innovation,DeJong2015ASummary,Momida2016StrongCalculations,Tagami2018ToOguchi} to correlate a key figure of merit, the piezoelectric modulus $d_{33}$ (C/m\textsuperscript{2}), with a structural parameter: namely, the ratio of unit cell length $c$ to width $a$, Figure~\ref{fig:Piezoca}a. As illustrated in \S \ref{sec:2} above, lattice parameter changes often have an origin in changes to the bonding in a material. In the Wurtzite crystal structure, these changes derive from alterations in the cation coordination and position along the $c$-axis.\cite{Uehara2019IncreaseMEMS,Tagami2018ToOguchi} The separation of cations and anions in turn drives a spontaneous polarization along the $c$-axis.\cite{Ota2025ImpactFilms} It is altogether unsurprising then that the $c:a$ ratio can correlate with $d_{33}$ and other parameters such as the minimum coercive field of ferroelectricity $E_c$ (MV/cm). For these reasons, $c:a$ optimization been the focus of several synthetic studies.\cite{Uehara2019IncreaseMEMS,Uehara2024ExcellentConcentration,Ota2025ImpactFilms}

The Wurtzite crystal structure, \textit{P63mc}, is hexagonal and, as noted previously, CDM is applicable in this setting with the caveat that the minimum possible value is nonzero. As Figure~\ref{fig:Piezoca}b illustrates, applying the cubic deviation metric to the Wurtzite family with the conventional hexagonal description reproduces the general trend seen with $c:a$ ratio in Figure~\ref{fig:Piezoca}a. This is expected behavior because with angles and $c$ constant, a decrease in $c:a$ ratio is the same as a decrease in deviation from a cube. Hence, a more cubic shape (lower CDM) correlates with higher $d_{33}$ within a particular material family. More interestingly, we can also apply CDM to the rhombohedral setting, as shown in Figure~\ref{fig:Piezoca}c, and again identify a trend, this time in a way which would be impossible with the conventional $c:a$ ratio test (since $c_\text{r}=a_\text{r}$ in this setting). Here we observe that in the alternate setting $d_{33}$ is \textit{lower} for more cubic materials. This can be understood through the relationship of the hexagonal to the rhombohedral setting (Figure~\ref{fig:HexRhomb}) - when the $c_\text{h}$ axis is changed, the rhombohedron inside is "squished" or "stretched", resulting in changes to the rhombohedral angles which in turn cause $\text{CDM}_\text{r}$ and $\text{CDM}_\text{h}$ to display different behavior (Figure~\ref{fig:MinRhombandHex}).

As discussed in detail in \S 2 above, CDM offers the benefit of more general applicability and automatic normalization, allowing one to quickly explore the connection between piezoelectric modulus and structure in additional materials families where $a$ and $c$ are not the only varying parameters. As a final demonstration of this, the materials with calculated piezoelectric modulus in the Materials Project database\cite{Jain2013Commentary:Innovation,DeJong2015ASummary} are plotted as a function of CDM in Figure~\ref{fig:PiezoMP}. The trend observable in the hexagonal lattices (Figure~\ref{fig:Piezoca}b-c) is not continued in other spacegroups, although the authors note significant errors in this dataset which also contains exclusively calculated, not experimental, data. The existing data suggest that the physics of the Wurtzite family specifically is what allows the increase in piezoelectric figure of merit with elongating $c$ axis. If these data accurately capture experimental trends, this reduces the parameters to target when synthesizing new piezoelectrics with different spacegroups. CDM also provides a common language for use by researchers working on different classes of piezoelectrics where the $c:a$ ratio may not always be a meaningful value, although as illustrated by Figure~\ref{fig:Piezoca}b-c, the physical behavior may manifest in different ways in different spacegroups.
\begin{figure}[H]
  \centering
  \includegraphics[width = 3in]{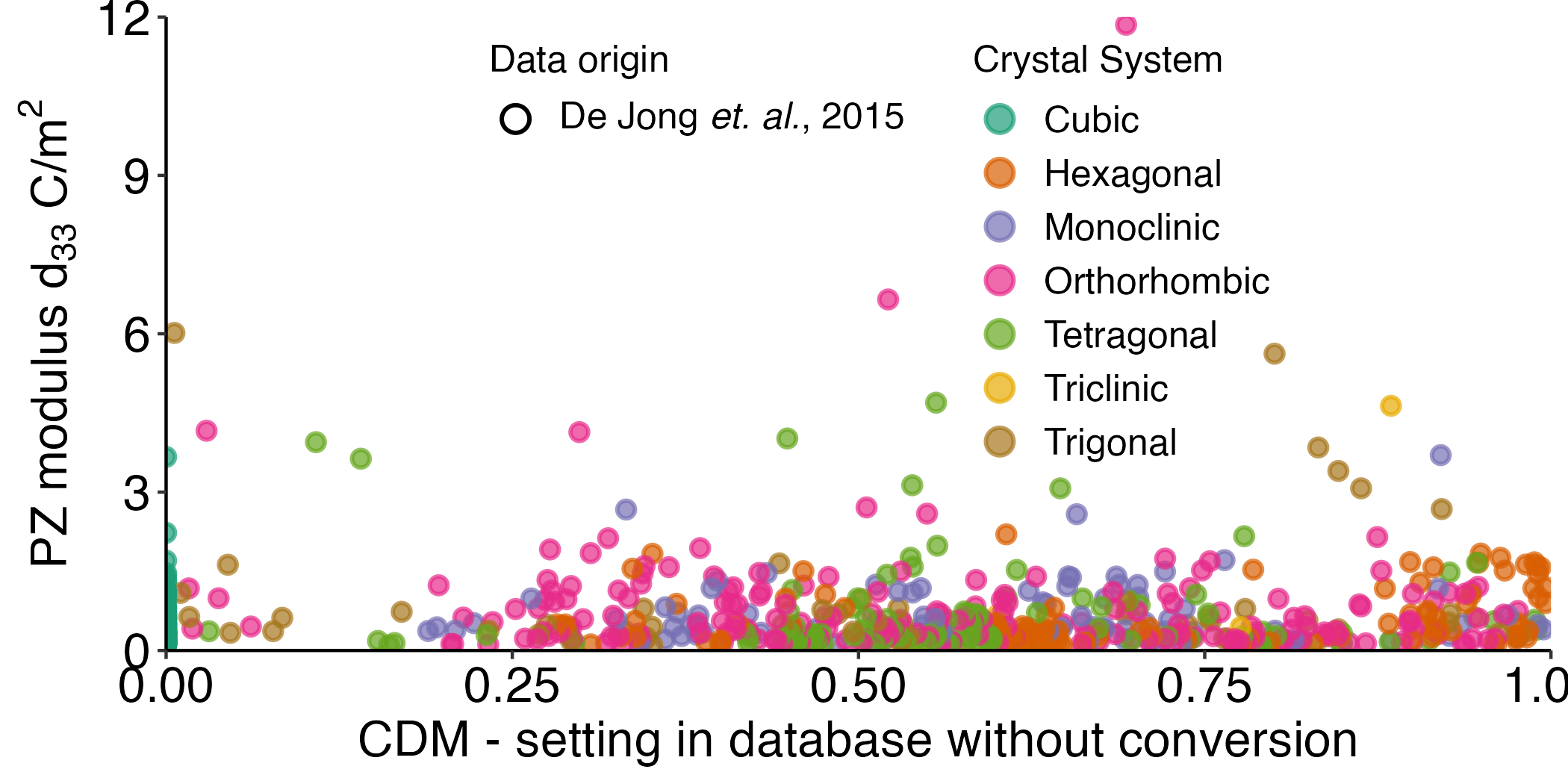}
  \caption{Piezoelectric modulus $d_{33}$ as a function of CDM for all piezoelectric materials in the Materials Project database, regardless of Laue class.\protect\cite{Jain2013Commentary:Innovation,DeJong2015ASummary}}
  \label{fig:PiezoMP}
\end{figure}

\subsection{Application to materials design}\label{sec:4}
As we have mentioned previously, the CDM may be used to interpret materials data after measurement and can be very useful as a tool to illustrate structure-property relationships. Consider, for example, the superconducting critical temperatures in the family of cuprate superconductors. The discovery of high-temperature superconductivity in La\textsubscript{2}CuO\textsubscript{4} \cite{Bednorz1986} led to an explosion of synthesis in the cuprate phase-space. It is well known that superconductivity in cuprates emerges from the essential CuO\textsubscript{2} planes within the structure\cite{zhang_effective_1988,bassen_real-space_2025}. Therefore, numerous interrelated perovskite-derived cuprates were found to superconduct in structurally and chemically analogous phase spaces containing these planes \cite{Park1995}. It was found that T\textsubscript{c} can vary considerably as a function of distances within or between CuO\textsubscript{2} planes \cite{Vanderah1992}. However, the relationship between cuprate unit cell and T\textsubscript{c}, agnostic to atomic position, remains a mystery. 
\begin{figure}[ht]
  \centering
  \includegraphics[width = 275pt]{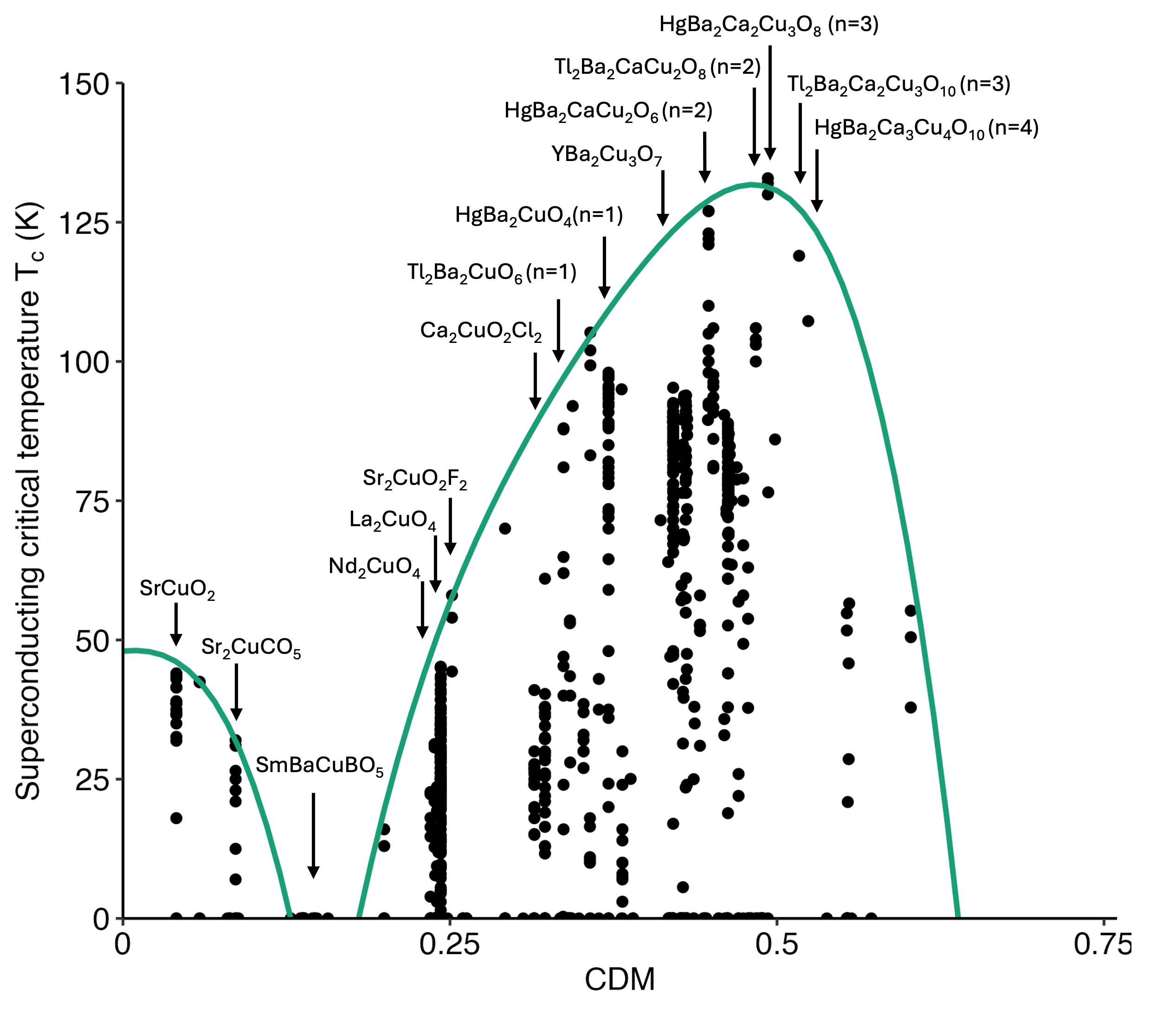}
  \caption{T\textsubscript{c} of cuprate superconductors in the 3DSC database\protect\cite{sommer_3dsc_2023,stanev_machine_2018} revealing two structural superconducting domes from polynomial fits of the maximum points. Materials with a T\textsubscript{c}~=~0 are not superconducting. The new materials synthesized in this work, SmBa$_{1-x}$K$_{x}$CuBO$_{5}$ (x=.05, 0.1, 0.2), lie between the two domes and show no evidence of superconductivity as described in the Appendix. Notable superconducting families are labeled with arrows above the plot. Vertical streaks indicate well-studied structural families that exhibit a range of T\textsubscript{c} values within the dataset.}
  \label{Cuprates}
\end{figure}

The cubic deviation metric allows for such a universal structural property comparison. Figure \ref{Cuprates} shows the relationship between T\textsubscript{c} and cubic deviation using all reported unit cell parameters of cuprates from the 3DSC, and thus SuperCon, database \cite{sommer_3dsc_2023,stanev_machine_2018}. Two superconducting domes emerge, represented by polynomial fits to the maximum T\textsubscript{c} points, which serve as visual guides. These lines indicate the expected maximum potential T\textsubscript{c} corresponding to a given CDM value. Structural families are labeled above the plot according to their idealized stoichiometries. The highly explored families within this dataset are shown as vertical streaking lines with a range of T\textsubscript{c}. 

The most cubic cuprate with CDM $\approx0.06$ is the infinite layer SrCuO\textsubscript{2} structure type, the parent compound of the cuprate family. Interestingly, between CDM $= 0.1-0.2$, there are no reported cuprate superconductors in the dataset. This may suggest that the unit cell geometry corresponding to CDM $= 0.1-0.2$ is not favorable for superconductivity among cuprates. We test this hypothesis through the synthesis and doping series of the underexplored noncentrosymmetric family LnBaCuBO\textsubscript{5} with CDM $= 0.14$, which  exists in the region between the superconducting domes. Specifically, we synthesize SmBa$_{1-x}$K$_{x}$CuBO$_{5}$ (x=0.05, 0.1, 0.2) and observe no evidence of superconductivity, as shown in Figure~\ref{fig:SmBaCuBO5_XRD}-\ref{fig:SmBaCuBO5_MPMS}. Synthesis information is provided in the Appendix. Additional syntheses within the LnBaCuBO\textsubscript{5} phase space are provided in the data repository. 

Evidence for superconductivity re-emerges at CDM $\approx0.2$. This marks the beginning of the second cuprate structural dome. The vertical streak at CDM $\approx 0.24$ corresponds to the highly explored family La\textsubscript{2}CuO\textsubscript{4}. Note that between CDM $= 0.238-0.314$ are the members of the A\textsubscript{n+1}B\textsubscript{n}X\textsubscript{3n+1} n=1 ruddlesden-popper derived cuprate materials, which include the T' electron doped family Nd\textsubscript{2}CuO\textsubscript{4} and the tetragonally elongated oxyhalide Ca\textsubscript{2}CuO\textsubscript{2}Cl\textsubscript{2}
The trend of increasing T\textsubscript{c} as a function of CDM in this second dome can be explained as the reduction of symmetry relative to the cube. Specifically, as the unit cell elongates from  La\textsubscript{2}CuO\textsubscript{4} to HgBa\textsubscript{2}Ca\textsubscript{2}Cu\textsubscript{3}O\textsubscript{8} at CDM $=0.493$, CDM increases proportionally, capturing the reduction of symmetry along the c axis. This plot indicates that the distortion of the unit cell, agnostic to connectivity of the atoms within it, captures the changes in dimensionality that favors higher T\textsubscript{c} in cuprate superconductors. Above CDM $=0.493$, further reduction in symmetry results in a decreased maximum T\textsubscript{c}. This is illustrated by analyzing the homologous families HgBa\textsubscript{2}Ca\textsubscript{n-1}Cu\textsubscript{n}O\textsubscript{2n+2} (n= 1, 2, 3, 4), and Tl\textsubscript{2}Ba\textsubscript{2}Ca\textsubscript{n-1}Cu\textsubscript{n}O\textsubscript{2n+4} (n= 1, 2, 3). Here, the homology number $n$ corresponds to the number of consecutive CuO\textsubscript{2} planes in the unit cell. For the Hg series, as $n$ increases and additional copper oxide layers are interstitially incorporated into the unit cell, T\textsubscript{c} increases up to $n = 3$, but decreases beyond this point, as seen in the $n = 4$ member. The precise location of the maximum in this plot could be refined with a more extensive dataset and the ongoing discovery of cuprate superconductors.  

The utility of the Cubic Deviation Metric (CDM) in guiding the design of novel cuprate superconductors is now evident. As generative machine learning enables the creation of vast numbers of hypothetical structures, a robust down-selection strategy is essential to prioritize candidate materials for experimental realization. We suggest that theoretical structures of cuprates with CDM $\approx0.4-0.5$ should be prioritized for this down-selection, where new and perhaps higher T\textsubscript{c} cuprate superconductors can be discovered. 

\section{Conclusion}
Herein we have constructed a metric CDM for the quantification of a unit cell's deviation from a perfect cube. The metric is unitless, volume-normalized, and applicable to all 7 crystal systems. It takes as inputs lattice parameters and provides a specific value CDM $=0$ when the unit cell is a cube, with predictable deviation as the unit cell becomes less cubic as individual lattice parameters vary continuously. Its identification of both a minimum and a maximum cubic deviation enables it to act as a quantitative descriptor of otherwise imprecise terms such as "pseudocubic" in describing real materials systems such as those covered in our four metric case studies.

Using the pseudobrookite materials family, we have shown the applicability of this metric to both continuous and discontinuous phase transitions. CDM varies approximately linearly with temperature and composition x in M\textsuperscript{3+}\textsubscript{3-x}Ti\textsuperscript{4+}\textsubscript{x}O\textsubscript{5} until a phase transition, in which case a change in slope and/or a jump in the metric is observed. CDM offers the benefit of agnosticism toward differing definitions of lattice parameters between space groups and simplifies the analysis of varying lattice parameters through a structural transition. This makes it easy to apply it simultaneously to a large number of materials, as illustrated in our second case study on the A\textsubscript{2}Ln\textsubscript{4}Cu\textsubscript{2n}Q\textsubscript{7+n} homologous series. Similar to a tolerance factor, CDM reveals trends with composition and groups different n-value homologies. However, unlike tolerance factors, the volume-normalized cubic deviation metric helps to illustrate the differences in structure in homologies even when they share the same space group, because of its incorporation of post-synthesis measurables (namely, lattice parameters).

Through our case study on Wurtzite binary piezoelectrics, we have also shown that the metric is applicable to multiple settings of the same lattice and can help demystify properties which initially appear to vary differently in different settings. This makes it readily applicable to problems of comparison between materials families where structure-property correlations known in one family or spacegroup may be obscured for others. We observe trends in piezoelectric modulus $d_{33}$ with CDM, reproducing trends typically illustrated by $c:a$ ratio in Wurtzite literature, and show that the physics of the Wurtzite piezoelectrics differs from that of other piezoelectric families. In our final case study, we illustrate similar concepts in the cuprate superconductors, to demonstrate how T\textsubscript{c} changes a function of CDM. We use data from the 3DSC database to show that there exist two structural domes describing T\textsubscript{c} as a function of CDM. Further, in between these domes, there is a region with no reported superconductivity in the dataset. We perform an experimental doping series of the SmBa$_{1-x}$K$_{x}$CuBO$_{5}$ family within this region to study whether superconductivity can emerge, but found no evidence. For tetragonally elongated superconductors, there exists a minimum $c$-axis symmetry wherein additional effects can work in tandem to increase T\textsubscript{c}, with diminishing returns beyond this point. This provides a framework for the design of new superconductors, as well as highlighting regions where otherwise promising candidates, such as SmBa$_{1-x}$K$_{x}$CuBO$_{5}$ (x=0.05, 0.1, 0.2), are not known to superconduct.

The cubic deviation metric CDM is an easy-to-use tool for quickly classifying unit cells and is likely to find applications in a wide array of materials problems. In particular, as more complete and error-free databases for materials are developed by and for AI applications, this tool may serve as a useful screening method or classifier for materials with known or suspected structure-property relationships. Since it enables structure analysis or prediction programs to identify nominally disparate crystal classes as occurring on a continuum, it may assist these programs in understanding the interrelated nature of unit cells, especially within doped or vacant structures - a weak spot for current methods. Open questions still remaining include its utility in mixed-phase systems such as solid solutions, locally disordered materials, and in \textit{in-situ} structural studies. We hope that this tool will prove useful for these and other applications and release with this work implementation of the metric in a variety of programming languages.
     
\appendix 
\fontfamily{phv}\selectfont
\normalfont
\section{Proof of metric}
\onehalfspacing
We shall prove three statements. First, we prove that $\mathcal{M}_\text{face}$ is zero only for square parallelograms (Figure~\ref{facediag}). Second, we show that $\mathcal{M}_\text{poly}$ is zero only when a parallelpiped $\mathcal{P}$ (Figure~\ref{fig:3Dangles}) is a cube, $(\mathcal{M}_\text{poly}=0) \iff (\text{parallelpiped $\mathcal{P}$ is a cube})$. Finally, we will prove that any parallelpiped scaled by any real factor will have equal $\mathcal{M}_\text{poly}$.

\subsection{Proof of Requirement 4, Part 1: squares (2D)}

\begin{theorem}
\label{thm:squares}
 For all parallelograms, Equation~\ref{eq:SquaresCase1} and Equation~\ref{eq:SquaresCase2} are true if and only if parallelogram MNOP is a square
\begin{equation}\label{eq:SquaresCase1}
\left|\ \frac{l_1}{\sqrt{l_1^2+l_2^2-2l_2l_1\cos{(\psi_{MNOP})}}}-\frac{1}{\sqrt{2}} \right|\ = 0 
\end{equation}

\begin{equation}\label{eq:SquaresCase2}
    \left|\ \frac{l_2}{\sqrt{l_1^2+l_2^2-2l_1l_2\cos{(\psi_{MNOP})}}}-\frac{1}{\sqrt{2}} \right|\ = 0 
\end{equation}

\end{theorem}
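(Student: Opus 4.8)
The plan is to establish the biconditional by a direct algebraic argument, clearing the absolute values and the single square root that appears in both equations. The key structural observation is that the two denominators are \emph{identical}: since $l_1 l_2 = l_2 l_1$, both Equation~\ref{eq:SquaresCase1} and Equation~\ref{eq:SquaresCase2} divide by the same face-diagonal length $d := \sqrt{l_1^2 + l_2^2 - 2 l_1 l_2 \cos\psi_{MNOP}}$ supplied by the Law of Cosines (Equation~\ref{lawofcosines}). For any genuine parallelogram $l_1, l_2 > 0$ and $\psi_{MNOP} \in (0^\circ, 180^\circ)$, so $d$ is a finite positive real, every ratio below is well defined, and each absolute-value condition is equivalent to setting its argument to zero.

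For the ($\Leftarrow$) direction I would begin from a square, where $l_1 = l_2 =: a$ and $\psi_{MNOP} = 90^\circ$. Substituting $\cos 90^\circ = 0$ gives $d = \sqrt{2 a^2} = a\sqrt 2$, whence $l_1/d = l_2/d = a/(a\sqrt 2) = 1/\sqrt 2$ and both absolute-value expressions vanish identically. This direction is immediate and requires no case analysis.

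The ($\Rightarrow$) direction is the substantive one, and here the two equations must be used \emph{in tandem}. Setting the argument of Equation~\ref{eq:SquaresCase1} to zero yields $d = \sqrt 2\, l_1$, and Equation~\ref{eq:SquaresCase2} yields $d = \sqrt 2\, l_2$; since both equal the same $d$, I conclude at once that $l_1 = l_2 =: a$. Squaring $d = \sqrt 2\, a$ gives $d^2 = 2a^2$, while expanding $d^2$ by the Law of Cosines gives $d^2 = 2a^2 - 2a^2\cos\psi_{MNOP}$. Equating these forces $\cos\psi_{MNOP} = 0$, and on the admissible range this has the unique solution $\psi_{MNOP} = 90^\circ$. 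Equal side lengths together with a right angle make $MNOP$ a square, closing the argument.

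The one point I expect to warrant emphasis — and the only genuine subtlety — is that \emph{neither} equation alone suffices. The single condition $l_1/d = 1/\sqrt 2$ reduces to a quadratic $l_2^2 - 2 l_1 \cos\psi_{MNOP}\, l_2 - l_1^2 = 0$, whose positive root $l_2 = l_1\left(\cos\psi_{MNOP} + \sqrt{\cos^2\psi_{MNOP} + 1}\right)$ describes a two-parameter family of distinctly non-square parallelograms satisfying it. It is precisely the \emph{simultaneous} vanishing of both terms that collapses this family to the square, which is the two-dimensional instance of the paper's claim that all lattice parameters must be incorporated to isolate the highest-symmetry cell. A small amount of bookkeeping is needed only to confirm that inverting the cosine over $(0^\circ, 180^\circ)$ introduces no spurious angle, so that $90^\circ$ is the sole admissible solution.
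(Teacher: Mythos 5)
Your proposal is correct and follows essentially the same route as the paper's proof: the backward direction by substituting $l_1=l_2$ and $\psi_{MNOP}=90\degree$ into the Law of Cosines, and the forward direction by using both equations to force $\sqrt{2}\,l_1 = d = \sqrt{2}\,l_2$ and hence $l_1=l_2$, then recovering the right angle (the paper invokes the converse of the Pythagorean theorem where you solve $\cos\psi_{MNOP}=0$ directly, an immaterial difference). Your closing remark that neither equation alone suffices, exhibited by the non-square family $l_2 = l_1\left(\cos\psi_{MNOP}+\sqrt{\cos^2\psi_{MNOP}+1}\right)$, is a correct and worthwhile addition not made explicit in the paper's appendix.
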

\vspace{5mm}
\begin{proof}
We will first prove the forward direction with a proof by cases, and then the backward direction. The first case is if parallelogram MNOP is a square then Equation~\ref{eq:SquaresCase1} is true. The second case is if parallelogram MNOP is a square then Equation~\ref{eq:SquaresCase2} is true.

\vspace{5mm}
\textbf{Forward direction proof Theorem~\ref{thm:squares}.}
If Equation~\ref{eq:SquaresCase1} is true and Equation~\ref{eq:SquaresCase2} is true then parallelogram MNOP is a square.
\begin{center}
    \includegraphics[scale=0.5]{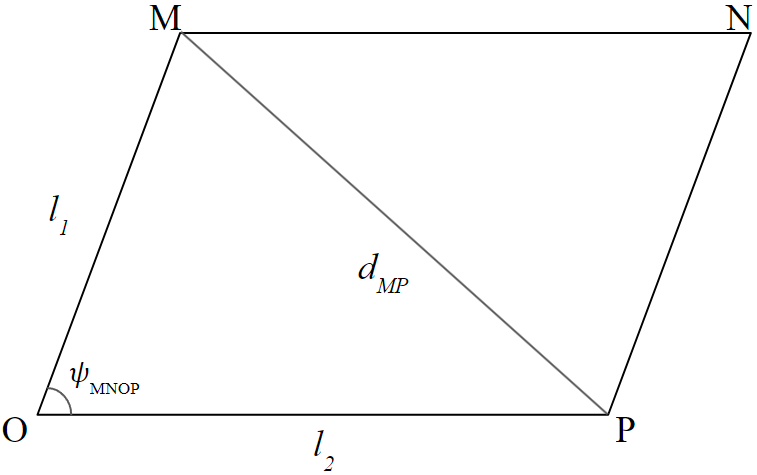}
\end{center}
\begin{align*}
    & \left|\ \frac{l_1}{\sqrt{l_1^2+l_2^2-2l_1l_2\cos{(\psi_{MNOP})}}}-\frac{1}{\sqrt{2}} \right|\ = 0   &&\text{(Given)}\\
    & \frac{l_1}{\sqrt{l_1^2+l_2^2-2l_1l_2\cos{(\psi_{MNOP})}}} = \frac{1}{\sqrt{2}}    &&\text{(Identity element (R,+))}\\
    & \frac{l_1*\sqrt{2}}{\sqrt{l_1^2+l_2^2-2l_1l_2\cos{(\psi_{MNOP})}}} = 1    &&\text{(Inverses (R,*))}
\end{align*}

 Using the Law of Cosines on $\triangle{MOP}$ we can see that  $d_{MP}= \sqrt{l_1^2+l_2^2-2l_1l_2\cos{(\psi_{MNOP})}}$, which is our denominator.
\begin{align*}
    & \frac{l_1*\sqrt{2}}{d_{MP}} = 1    &&\text{(substitution)}\\
    & l_1*\sqrt{2}  = d_{MP}    &&\text{(Inverses (R,+))}
\end{align*}

Using $\left|\ \frac{l_2}{\sqrt{l_1^2+l_2^2-2l_1l_2\cos{(\psi_{MNOP})}}}-\frac{1}{\sqrt{2}} \right|\ = 0$ the identical argument is made to find $l_2*\sqrt{2}  = d_{MP}$
\begin{align*}
    & l_1*\sqrt{2}  = d_{MP}    &&\text{ }\\
    & l_1*\sqrt{2}  = l_2*\sqrt{2}    &&\text{($l_2*\sqrt{2}  = d_{MP}$ substitution) }\\
    & l_1  = l_2    &&\text{(inverses (R,*)) }
\end{align*}
We now have all sides of the parallelogram MNOP congruent, and the hypotenuse of MOP equal to $l_1*\sqrt{2}$. since sides $l_1$,$l_2$ and hypotenuse $l_1*\sqrt{2}$ satisfy the Pythagorean equation, $\triangle $MOP must be a right triangle. Thus we have proven the forward direction.

\vspace{5mm}
 \textbf{Backward direction proof Theorem~\ref{thm:squares} - Case 1.} (parallelogram MNOP is a square)$\implies$ Equation~\ref{eq:SquaresCase1} is true.

\begin{center}
    \includegraphics[scale=0.33]{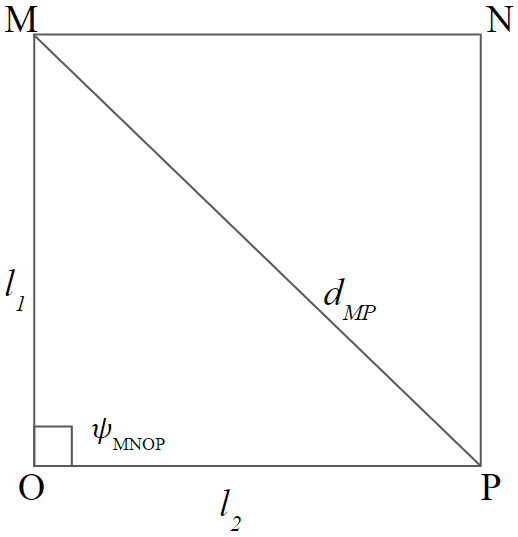}
\end{center}

Using the Law of Cosines on $\triangle MOP$ we have $d_{MP} = \sqrt{l_1^2+l_2^2-2l_1l_2\cos{(\psi_{MNOP})}}$ . Because $\triangle MOP$ is an isosceles right triangle triangle (angles \ang{45}, \ang{45}, and \ang{90}), we also know $d_{MP}=l_1*\sqrt{2}$.
 \begin{align*}
    &l_1*\sqrt{2} = d_{MP}  & &&\text{(from square)}\\
    &\frac{l_1*\sqrt{2}}{d_{MP}} = \frac{d_{MP}}{d_{MP}}  & &&\text{(important to note $d_{MP} \ne 0$)}\\
    &\frac{l_1}{d_{MP}} = \frac{1}{\sqrt{2}}  & &&\text{(Identity element)}\\
    &\left|\ \frac{l_1}{\sqrt{l_1^2+l_2^2-2l_1l_2\cos{(\psi_{MNOP})}}}-\frac{1}{\sqrt{2}} \right|\ = 0   & &&\text{($|0|=0$)}
\end{align*}

\vspace{5mm}
\textbf{Backward direction proof Theorem~\ref{thm:squares} - Case 2.} (parallelogram MNOP is a square)$\implies$ Equation~\ref{eq:SquaresCase2} is true.

It is trivial to show this using the exact same logic as above for the backward direction proof Theorem~\ref{thm:squares} Case 1. Thus we conclude:

(parallelogram MNOP is a square) $\implies $ \\ ($ \left|\ \frac{l_1}{\sqrt{l_1^2+l_2^2-2l_1l_2\cos{(\psi_{MNOP})}}}-\frac{1}{\sqrt{2}} \right|\ = 0$) $\land $ ($\left|\ \frac{l_2}{\sqrt{l_1^2+l_2^2-2l_1l_2\cos{(\psi_{MNOP})}}}-\frac{1}{\sqrt{2}} \right|\ = 0 $ )
\end{proof}

\vspace{5mm}
\begin{corollary}\label{cor:MSquare0iff}
 Let surface MNOP be a parallelogram. The Square Deviation Metric of Parallelogram MNOP $\mathcal{M}_{s_{MNOP}}=0$ if and only if parallelogram MNOP is a square.

 We define $\mathcal{M}_{s_{MNOP}}$ as the following.
 \begin{equation}
  \mathcal{M}_{s_{MNOP}}= \left|\ \frac{l_1}{\sqrt{l_1^2+l_2^2-2l_1l_2\cos{(\psi_{MNOP})}}}-\frac{1}{\sqrt{2}} \right|\ + \left|\ \frac{l_2}{\sqrt{l_1^2+l_2^2-2l_1l_2\cos{(\psi_{MNOP})}}}-\frac{1}{\sqrt{2}} \right|\ 
\end{equation}
\end{corollary}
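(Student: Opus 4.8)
The plan is to obtain this corollary almost immediately from Theorem~\ref{thm:squares}, adding only one elementary fact about sums of non-negative quantities. The crucial observation is that $\mathcal{M}_{s_{MNOP}}$ is a sum of two absolute values, each of which is non-negative. Since a sum of non-negative real numbers is zero if and only if every summand is zero, the single condition $\mathcal{M}_{s_{MNOP}}=0$ is equivalent to the simultaneous vanishing of the two absolute-value terms, which are exactly the left-hand sides of Equation~\ref{eq:SquaresCase1} and Equation~\ref{eq:SquaresCase2}.

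With this reduction in hand, both directions follow directly. For the forward direction, I would assume $\mathcal{M}_{s_{MNOP}}=0$; by non-negativity each summand must individually vanish, so Equation~\ref{eq:SquaresCase1} and Equation~\ref{eq:SquaresCase2} both hold, and the forward direction of Theorem~\ref{thm:squares} then yields that MNOP is a square. For the backward direction, I would assume MNOP is a square; the backward direction of Theorem~\ref{thm:squares} (Cases 1 and 2) guarantees that both Equation~\ref{eq:SquaresCase1} and Equation~\ref{eq:SquaresCase2} hold, so each absolute-value term in $\mathcal{M}_{s_{MNOP}}$ is zero and their sum vanishes.

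There is no substantive obstacle here: the only new ingredient beyond Theorem~\ref{thm:squares} is the non-negativity argument that licenses splitting $\mathcal{M}_{s_{MNOP}}=0$ into the two separate equations. The proof is essentially bookkeeping that connects the definition of $\mathcal{M}_{s_{MNOP}}$ to the biconditional already established in the theorem.
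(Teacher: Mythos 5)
Your proposal is correct and follows essentially the same route as the paper's own proof: reduce $\mathcal{M}_{s_{MNOP}}=0$ to the simultaneous vanishing of the two absolute-value terms and then invoke the biconditional of Theorem~\ref{thm:squares}. In fact your version is slightly more careful, since you explicitly state the non-negativity argument that justifies splitting the sum, a step the paper leaves implicit when it passes from the conjunction of Equations~\ref{eq:SquaresCase1} and~\ref{eq:SquaresCase2} to the vanishing of their sum.
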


\begin{proof}
From both backward direction proofs of Theorem~\ref{thm:squares} we have:
$( \left|\ \frac{l_1}{\sqrt{l_1^2+l_2^2-2l_1l_2\cos{(\psi_{MNOP})}}}-\frac{1}{\sqrt{2}} \right|\ = 0 )$ $\land $ $( \left|\ \frac{l_2}{\sqrt{l_1^2+l_2^2-2l_1l_2\cos{(\psi_{MNOP})}}}-\frac{1}{\sqrt{2}} \right|\ = 0 )$ $ \iff $ parallelogram MNOP is a square. 

\vspace{5mm}
Thus:
$( \left|\ \frac{l_1}{\sqrt{l_1^2+l_2^2-2l_1l_2\cos{(\psi_{MNOP})}}}-\frac{1}{\sqrt{2}} \right|\ + \left|\ \frac{l_2}{\sqrt{l_1^2+l_2^2-2l_1l_2\cos{(\psi_{MNOP})}}}-\frac{1}{\sqrt{2}} \right|\ = 0 )$ $ \iff $ parallelogram MNOP is a square.

\vspace{5mm}
The left hand side of this statement is $\mathcal{M}_{s_{MNOP}}$. Thus we arrive at
$( \mathcal{M}_{s_{MNOP}} = 0 )$ $ \iff $ parallelogram MNOP is a square.
\end{proof}

\subsection{Proof of Requirement 4, Part 2: cubes (3D)}
\begin{theorem}
\label{thm:cubes}
Let parallelpiped $\mathcal{P}$ have vertices M,N,O,P,Q,R,S,T. parallelpiped $\mathcal{P}$ has 3 surfaces that are squares joined at a common vertex (point O) if and only if parallelpiped $\mathcal{P}$ is a cube.
\begin{center}
    \includegraphics[scale=1]{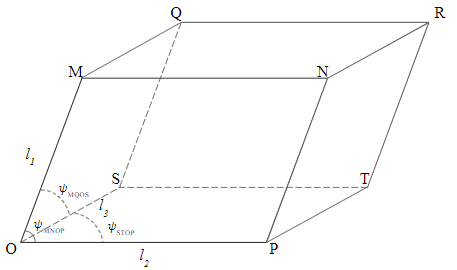}
\end{center}

\end{theorem}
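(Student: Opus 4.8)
The plan is to prove the biconditional by treating the two directions separately, with the reverse direction essentially immediate and the forward direction carrying the real content. For the reverse direction I would assume $\mathcal{P}$ is a cube; by definition every edge has the same length and every pair of edges meeting at a vertex is perpendicular, so each of its six faces is a square (exactly the square condition certified by Corollary~\ref{cor:MSquare0iff}). In particular the three faces joined at $O$ are squares, which is what the statement requires.

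For the forward direction I would exploit the fact that a parallelpiped is completely determined by the three edge vectors emanating from a single vertex. Let $\vec{u}$, $\vec{v}$, $\vec{w}$ be the three edges of $\mathcal{P}$ issuing from $O$, so that the three faces meeting at $O$ are the parallelograms spanned by the pairs $(\vec{u},\vec{v})$, $(\vec{v},\vec{w})$, and $(\vec{u},\vec{w})$. The key structural observation is that each of these three edges is shared by exactly two of the three faces. I would then invoke Theorem~\ref{thm:squares}: a face is a square precisely when its two bounding edges are equal in length and its included angle is $90\degree$. Applying this to each square face in turn gives a chain of constraints—the square on $(\vec{u},\vec{v})$ forces $|\vec{u}|=|\vec{v}|$ with $\vec{u}\perp\vec{v}$, the square on $(\vec{v},\vec{w})$ forces $|\vec{v}|=|\vec{w}|$ with $\vec{v}\perp\vec{w}$, and the square on $(\vec{u},\vec{w})$ forces $|\vec{u}|=|\vec{w}|$ with $\vec{u}\perp\vec{w}$. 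Combining these yields $|\vec{u}|=|\vec{v}|=|\vec{w}|$ together with mutual orthogonality of all three edges.

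Since $\vec{u},\vec{v},\vec{w}$ generate the whole solid, every remaining face is a parallel translate of one of the three faces at $O$ and is therefore also a square, so $\mathcal{P}$ has equal edges and all right angles—i.e., it is a cube. The step I expect to require the most care is precisely this last one: arguing rigorously that the three square faces at $O$ propagate to the opposite faces and thereby pin down the entire solid, rather than merely constraining one corner. This is where the parallelpiped hypothesis does the essential work, because opposite faces are congruent parallel translates; I would make that explicit rather than leave it implicit, since ``three square faces at a vertex'' is, for a general hexahedron, strictly weaker than ``cube.'' Within the parallelpiped class, however, the three generating edges determine everything, so no additional obstruction arises.
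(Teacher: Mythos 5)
Your proof is correct and follows essentially the same route as the paper's: the backward direction is immediate from the definition of a cube, and the forward direction uses the defining structure of a parallelpiped (opposite faces are congruent translates of the three faces at $O$) to propagate the three squares at the common vertex to the entire solid. Your edge-vector formulation with $\vec{u},\vec{v},\vec{w}$ merely makes explicit, via equal lengths and mutual orthogonality, the same congruence argument the paper states more tersely.
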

\begin{proof} 
$\forall$ parallelpiped $\mathcal{P}$ (having vertices M,N,O,P,Q,R,S, and T); (parallelogram MNOP is a square) $\land$ (parallelogram STOP is a square) $\land$ (parallelogram MQOS is a square) $\iff$ (parallelpiped $\mathcal{P}$ is a cube).

\vspace{5mm}
\textbf{Forward direction of Theorem~\ref{thm:cubes}.} $\forall$ parallelpiped $\mathcal{P}$ (having vertices M,N,O,P,Q,R,S and T); [(parallelogram MNOP is a square) $\land$ (parallelogram STOP is a square) $\land$ (parallelogram MQOS is a square)] $\implies$ (parallelpiped $\mathcal{P}$ is a cube).

Let point O be the shared vertex of MNOP, OPST and MQOS. To prove parallelpiped $\mathcal{P}$ is a cube we shall first prove all faces of $\mathcal{P}$ are squares. By assumption, we already have MNOP, OPST and MQOS are squares with the same side length. Using the definition of a parallelpiped object we know each of the three faces MNOP, OPST and MQOS is congruent with its opposite face in $\mathcal{P}$. Thus all sides of $\mathcal{P}$ are squares. 

Since MNOP is a square $l_1=l_2$. We can now see that all edges of each face and of $\mathcal{P}$ are the same length.

Thus we have shown the three remaining sides of the parallelpiped $\mathcal{P}$ are squares, and all edges of the parallelpiped $\mathcal{P}$ have the same length $l_1$. Therefore by definition of a cube, parallelpiped $\mathcal{P}$ is a cube. 

\vspace{5mm}
\textbf{Backward direction of Theorem~\ref{thm:cubes}.} 
$\forall$ parallelpiped $\mathcal{P}$ (having vertices M,N,O,P,Q,R,S, and T); (parallelpiped $\mathcal{P}$ is a cube) $\implies$ [(parallelogram MNOP is a square) $\land$ (parallelogram STOP is a square) $\land$ (parallelogram MQOS is a square)] 

Since parallelpiped $\mathcal{P}$ is a cube, all sides are squares by the definition of a cube. Thus any three faces that share a common vertex will be squares. Thus the backwards direction is proved.

We conclude [(parallelogram MNOP is a square) $\land$ (parallelogram STOP is a square) $\land$ (parallelogram MQOS is a square)] $\implies$ (parallelpiped $\mathcal{P}$ is a cube).

\end{proof}

\begin{corollary}\label{cor:MCube0iff}
For all parallelpipeds, $\mathcal{M}_{C_{\mathcal{P}}} = 0$ (Cubic Deviation Metric of parallelpiped $\mathcal{P}$) if and only if $\mathcal{P}$ is a cube.

We define $\mathcal{M}_{C_{\mathcal{P}}}$ to be the following given the parallelpiped $\mathcal{P}$ (with vertices M,N,O,P,R,S,T) below. 
\begin{equation}\label{eq:MCP}
  \mathcal{M}_{C_{\mathcal{P}}}=  \mathcal{M}_{s_{MNOP}}+ \mathcal{M}_{s_{STOP}}+ \mathcal{M}_{s_{MQOS}}
\end{equation} 
\end{corollary}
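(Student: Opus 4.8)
The plan is to reduce this corollary entirely to the two results already in hand, namely Corollary~\ref{cor:MSquare0iff} and Theorem~\ref{thm:cubes}, by exploiting the structure of $\mathcal{M}_{C_{\mathcal{P}}}$ as defined in Equation~\ref{eq:MCP}. The object $\mathcal{M}_{C_{\mathcal{P}}}$ is a sum of three face-metrics, and each face-metric is in turn a sum of absolute values. I would build a chain of biconditionals: $\mathcal{M}_{C_{\mathcal{P}}}=0$ iff each of its three summands vanishes; each summand vanishes iff the corresponding face is a square; and the three faces MNOP, STOP, MQOS (which by construction meet at the common vertex O) are all squares iff $\mathcal{P}$ is a cube.

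First I would record the elementary fact that a finite sum of nonnegative real numbers equals zero if and only if every summand is zero. Since $\mathcal{M}_{s_{MNOP}}$, $\mathcal{M}_{s_{STOP}}$, and $\mathcal{M}_{s_{MQOS}}$ are each a sum of absolute values, each is nonnegative, so
\[
\mathcal{M}_{C_{\mathcal{P}}}=0 \iff \mathcal{M}_{s_{MNOP}}=0 \ \land\ \mathcal{M}_{s_{STOP}}=0 \ \land\ \mathcal{M}_{s_{MQOS}}=0 .
\]
Next I would apply Corollary~\ref{cor:MSquare0iff} to each of the three faces in turn. That corollary gives $\mathcal{M}_{s_{MNOP}}=0$ iff parallelogram MNOP is a square, and the identical statement holds verbatim for STOP and for MQOS. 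Substituting these three equivalences into the conjunction above converts the vanishing of $\mathcal{M}_{C_{\mathcal{P}}}$ into the purely geometric condition that MNOP, STOP, and MQOS are simultaneously squares.

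Finally I would invoke Theorem~\ref{thm:cubes}. Because the three faces appearing in Equation~\ref{eq:MCP} are exactly the three faces of $\mathcal{P}$ sharing the common vertex O, the hypothesis of that theorem is satisfied, and it yields that these three faces are all squares if and only if $\mathcal{P}$ is a cube. Composing the three biconditionals gives $\mathcal{M}_{C_{\mathcal{P}}}=0 \iff \mathcal{P}$ is a cube, as claimed.

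The only genuine obstacle is the nonnegativity (``sum of nonnegatives'') step: without it one could conclude only that the three terms sum to zero, not that each individually vanishes, and the subsequent appeal to Corollary~\ref{cor:MSquare0iff} would break down. I would therefore state it explicitly and emphasize that it relies on every term being an absolute value, hence nonnegative. Everything else is direct citation of the two prior results, with the small bookkeeping point that the three faces selected in Equation~\ref{eq:MCP} are precisely those meeting at vertex O, which is what licenses the use of Theorem~\ref{thm:cubes}.
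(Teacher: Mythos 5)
Your proposal is correct and follows essentially the same route as the paper's proof: both reduce the claim to Corollary~\ref{cor:MSquare0iff} applied to each of the three faces meeting at vertex O, together with Theorem~\ref{thm:cubes}, and then pass between the vanishing of $\mathcal{M}_{C_{\mathcal{P}}}$ and the vanishing of each summand. You are in fact slightly more careful than the paper, which silently swaps the conjunction $(\mathcal{M}_{s_{MNOP}}=0)\land(\mathcal{M}_{s_{STOP}}=0)\land(\mathcal{M}_{s_{MQOS}}=0)$ for the single condition that the sum vanishes; your explicit ``finite sum of nonnegatives is zero iff each term is zero'' step is precisely what licenses the backward direction of that swap.
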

\begin{proof} $\forall$ parallelpiped $\mathcal{P}$, $(\mathcal{M}_{C_{\mathcal{P}}}=0)$ $\iff \mathcal{P}$ is a cube.

From Theorem~\ref{thm:cubes} we have

$\forall$ parallelpiped $\mathcal{P}$, (parallelogram MNOP is a square) $\land$ (parallelogram STOP is a square) $\land$ (parallelogram MQOS is a square) $\iff$ (parallelpiped $\mathcal{P}$ is a cube).

From Corollary~\ref{cor:MSquare0iff} we know $(\mathcal{M}_{s_{MNOP}} = 0 )\iff$ parallelogram MNOP is a square, thus as follows:

$\forall$ parallelpiped $\mathcal{P}$, $[(\mathcal{M}_{s_{MNOP}}=0) \land (\mathcal{M}_{s_{STOP}}=0) \land (\mathcal{M}_{s_{MQOS}}=0)] \iff $ (parallelpiped $\mathcal{P}$ is a cube).

 $\forall$ parallelpiped $\mathcal{P}$, $[(\mathcal{M}_{s_{MNOP}} + \mathcal{M}_{s_{STOP}} + \mathcal{M}_{s_{MQOS}}=0) =0] \iff $ (parallelpiped $\mathcal{P}$ is a cube).

Now we see the metric $\mathcal{M}_{C_{\mathcal{P}}}$ on the left-hand side and substitute it to find the following

 $\forall$ parallelpiped H, $\mathcal{M}_{C_{H}} =0 \iff $ (parallelpiped $\mathcal{P}$ is a cube)
\end{proof}

The proof of Requirement 4 in higher dimensions follows in a similar fashion.

\subsection{Proof of Requirement 2: scalability}
\begin{theorem}\label{thm:MScaling}
    If parallelogram B the result of multiplying parallelogram A by a scalar, then both $\mathcal{M}_s$ are equal.
\end{theorem}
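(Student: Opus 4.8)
The plan is to exploit the fact that $\mathcal{M}_s$ is assembled entirely from the dimensionless ratios $\frac{l_1}{d}$ and $\frac{l_2}{d}$, where $d=\sqrt{l_1^2+l_2^2-2l_1l_2\cos\psi}$ is the face diagonal supplied by the Law of Cosines (Equation~\ref{lawofcosines}). Every quantity entering $\mathcal{M}_s$ is a length, and a uniform scaling multiplies all lengths by a common factor, so each such ratio ought to be left unchanged. Concretely, I would first record the geometric effect of the scaling: if parallelogram $B$ is parallelogram $A$ multiplied by a nonzero real scalar $k$, then its two side lengths become $|k|l_1$ and $|k|l_2$ while the interior angle $\psi$ is preserved, since scaling (including the point reflection contributed when $k<0$) is a similarity and hence angle-preserving.

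Next I would substitute these scaled side lengths into the diagonal. The denominator becomes $\sqrt{(|k|l_1)^2+(|k|l_2)^2-2(|k|l_1)(|k|l_2)\cos\psi}=|k|\sqrt{l_1^2+l_2^2-2l_1l_2\cos\psi}=|k|\,d$, because the common factor $k^2$ pulls out of the radical. The numerator of the first term is $|k|l_1$, so the first ratio for $B$ is $\frac{|k|l_1}{|k|d}=\frac{l_1}{d}$, exactly the ratio for $A$, with the $|k|$ factors cancelling. The same cancellation applies to the second term, giving $\frac{|k|l_2}{|k|d}=\frac{l_2}{d}$. Hence each of the two absolute-value terms in $\mathcal{M}_{s_B}$ equals the corresponding term in $\mathcal{M}_{s_A}$, and therefore $\mathcal{M}_{s_A}=\mathcal{M}_{s_B}$.

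The only real obstacle here is bookkeeping rather than mathematics: I must insist that the scalar be nonzero, so that $d\neq0$ and the ratios are defined (a zero scalar collapses the parallelogram to a point), and I must handle a negative scalar by writing the side lengths as $|k|l_i$, which is precisely why the $k^2$ inside the radical factors cleanly as $|k|$ regardless of sign. With this convention the cancellation is exact for every real $k\neq0$. Finally, the same argument extends immediately to $\mathcal{M}_\text{poly}$ by applying it face-by-face: a uniform scaling of a parallelepiped scales each of its three parallelogram faces by the same $k$, leaving each $\mathcal{M}_\text{face}$—and hence their average—invariant.
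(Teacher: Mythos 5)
Your proof is correct and follows essentially the same route as the paper's: scaling multiplies both side lengths by the scalar while preserving the angle $\psi$, the factor pulls out of the Law-of-Cosines radical, and it cancels in each dimensionless ratio $l_i/d$, leaving $\mathcal{M}_s$ unchanged (and the paper likewise extends this face-by-face to $\mathcal{M}_\text{poly}$ in its Corollary~\ref{cor:Scalability}). If anything, your version is slightly more careful than the paper's, which quantifies over all $n \in \mathbb{R}$ without excluding $n=0$ or writing $|n|$ when extracting the factor from the square root.
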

\begin{proof}
Given parallelogram A and B have positive area,
$\forall n, n \in \mathbb{R}, B=nA \implies \mathcal{M}_{s_B}=\mathcal{M}_{s_A}$

\begin{align*}
    \mathcal{M}_{s_A} &= \left|\ \frac{l_1}{\sqrt{l_1^2+l_2^2-2l_1l_2\cos{(\psi)}}}-\frac{1}{\sqrt{2}} \right|\ + \left|\ \frac{l_2}{\sqrt{l_1^2+l_2^2-2l_1l_2\cos{(\psi)}}}-\frac{1}{\sqrt{2}} \right|\  & &&\text{ }\\
    &= \frac{n}{n} (\left|\ \frac{l_1}{\sqrt{l_1^2+l_2^2-2l_1l_2\cos{(\psi)}}}-\frac{1}{\sqrt{2}} \right|\ + \left|\ \frac{l_2}{\sqrt{l_1^2+l_2^2-2l_1l_2\cos{(\psi)}}}-\frac{1}{\sqrt{2}} \right|\ ) &  &&\text{ }\\
     &= \left|\ \frac{nl_1}{\sqrt{(nl_1)^2+(nl_2)^2-2(nl_1)(nl_2)\cos{(\psi)}}}-\frac{1}{\sqrt{2}} \right|\ + \left|\ \frac{nl_2}{\sqrt{(nl_1)^2+(nl_2)^2-2(nl_1)(nl_2)\cos{(\psi)}}}-\frac{1}{\sqrt{2}} \right|\  & &&\text{ }\\
     \mathcal{M}_{s_A} &= \mathcal{M}_{s_B}  & &&\text{ } 
\end{align*}
Since parallelogram B has sides $nl_1,nl_2$ and $\angle \psi$ we can substitute $\mathcal{M}_{s_B}$.
\end{proof}
\begin{corollary}\label{cor:Scalability}
     If parallelpiped H the result of multiplying parallelpiped J by a scalar, then both $\mathcal{M}_C$ are equal.
\end{corollary}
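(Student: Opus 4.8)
The plan is to reduce the three-dimensional claim to the two-dimensional scaling result of Theorem~\ref{thm:MScaling} by exploiting the additive structure of $\mathcal{M}_C$ recorded in Equation~\ref{eq:MCP}. Recall that $\mathcal{M}_{C_{\mathcal{P}}}$ is defined as the sum $\mathcal{M}_{s_{MNOP}}+\mathcal{M}_{s_{STOP}}+\mathcal{M}_{s_{MQOS}}$ of the square-deviation metrics of the three faces meeting at the common vertex $O$. Since each summand depends only on the two side lengths and the included angle of a single parallelogram face, and since Theorem~\ref{thm:MScaling} already establishes that each such summand is invariant when that face is scaled, the corollary should follow immediately by applying the theorem term by term and summing.

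Concretely, first I would fix a scalar $n$ and write $H=nJ$, taking $n>0$ so that $H$ inherits positive volume and each of its faces positive area, matching the hypotheses of Theorem~\ref{thm:MScaling}. Second, I would observe that uniform dilation by $n$ is a similarity transformation: it multiplies every edge vector of $J$ by $n$ while leaving every interior angle unchanged. In particular, the three generating faces $MNOP$, $STOP$, and $MQOS$ of $J$ map to the corresponding faces of $H$, each of which is exactly the original parallelogram with both side lengths scaled by $n$ and its included angle fixed. Third, I would apply Theorem~\ref{thm:MScaling} to each of these three faces in turn, obtaining the equality of $\mathcal{M}_s$ for each face of $J$ and its image in $H$. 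Finally, summing the three resulting equalities and invoking the definition in Equation~\ref{eq:MCP} gives $\mathcal{M}_{C_H}=\mathcal{M}_{C_J}$.

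The argument is essentially bookkeeping, so there is no deep obstacle; the only step requiring any care is the geometric claim in the second part — namely, that a single scale factor applied to the whole parallelpiped scales each of its three defining faces by that same factor while preserving their angles. This holds because dilation acts independently and uniformly on the three edge vectors emanating from $O$, so any face spanned by two of them is dilated by the same $n$. Once this is granted, the reduction to the already-proved two-dimensional case is purely formal, and the identical reasoning extends verbatim to the higher-dimensional metrics, where $\mathcal{M}_C$ is again a finite sum of face metrics, so no new ideas are needed beyond Theorem~\ref{thm:MScaling}.
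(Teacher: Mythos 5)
Your proposal is correct and follows essentially the same route as the paper's own proof: decompose $\mathcal{M}_{C}$ via Equation~\ref{eq:MCP} into the three face metrics, apply Theorem~\ref{thm:MScaling} to each face, and sum. Your version is in fact slightly more careful than the paper's, since you explicitly restrict to $n>0$ and justify that dilation scales each face uniformly while preserving its angle --- a step the paper leaves implicit.
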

\begin{proof}
Given parallelpiped H and J have positive area,
$\forall n, n \in \mathbb{R}, H=nJ \implies \mathcal{M}_{C_H}=\mathcal{M}_{C_J}$
\begin{align*}
    \mathcal{M}_{C_H} &= \mathcal{M}_{s_{h1}} + \mathcal{M}_{s_{h2}} + \mathcal{M}_{s_{h3}}  &&\text{(definition of $\mathcal{M}_{C_H}$)}\\
    &= \mathcal{M}_{s_{j1}} + \mathcal{M}_{s_{j2}} + \mathcal{M}_{s_{j3}}   &&\text{(Theorem~\ref{thm:MScaling})}\\
     \mathcal{M}_{C_H} &= \mathcal{M}_{C_J}   &&\text{(definition of $\mathcal{M}_{C_J}$)}
\end{align*}
\end{proof}

From Corollary~\ref{cor:MCube0iff} we have proven that $\mathcal{M}_{C}=0$ is equivalent to that parallelpiped being a cube. We now define $\mathcal{M}_\text{poly}$ as $\mathcal{M}_{C}$. Thus we have proven $\mathcal{M}_\text{poly}$ is only zero when the parallelpiped is a cube.

From Corollary~\ref{cor:Scalability} we have shown that $\mathcal{M}_C$, is unaffected by scaling, Thus we have shown any parallelpiped scaled by any real factor will have equal $\mathcal{M}_\text{poly}$. 

\subsection{Necessity of piecewise definition and requirement 3}
\begin{figure}[H]
\centering
    \includegraphics[scale=0.75]{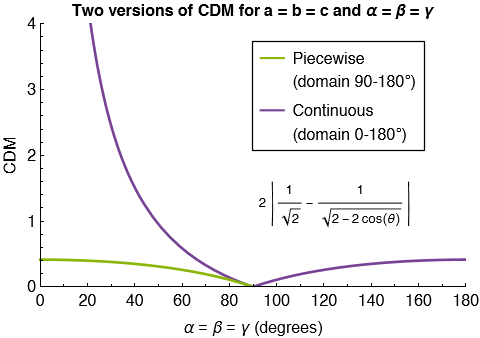}
    \caption{Plot showing metric value with restricted domain and without.}
    \label{fig:piecewise}
\end{figure}

We have already shown that the metric is only zero when taking angles between \ang{0} to \ang{180} however, due to the symmetry of the cosine graph when extended, there is another zero at \ang{270}. Since we have adjusted our graph to map the output of \ang{180} to \ang{270} onto the input of \ang{0} to \ang{90}, we have mapped this new zero at \ang{270} to our \ang{90}. The resulting piecewise function is continuous, meaning it is free from abrupt value changes and small changes in the input result in small changes in the output. The proof above holds for this new piecewise continuous function.

Since linearly scaling parallelpiped objects does not affect the angles, Corollaries~\ref{cor:MCube0iff} and \ref{cor:Scalability} hold. 

This piecewise definition creates a global maximum and adds symmetry between acute and obtuse angled parallelpiped objects. The five metric requirements laid out in the main text are preserved. CDM is only zero when the object has square faces, is independent of scalar multiplication of parallelpiped objects, depends only on the six lattice parameters, varies smoothly as the parallelpiped becomes less cubic, and matches the periodicity of the real lattice. 

\section{Additional pseudobrookite data}
\begin{figure}[H]
    \includegraphics[scale=0.8]{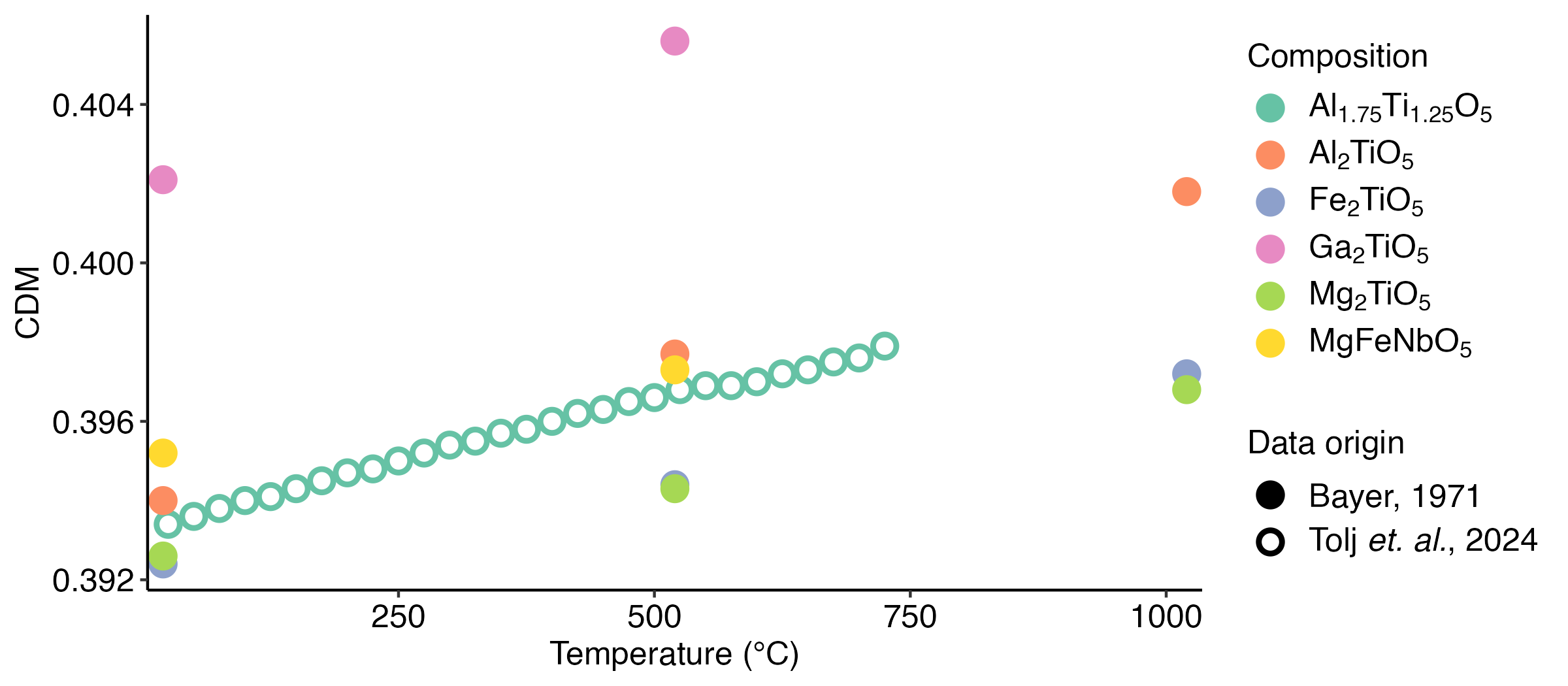}
    \caption{Pseudobrookite M\textsubscript{3-x}Ti\textsubscript{x}O\textsubscript{5} temperature variation.}
    \label{fig:PBextradata}
\end{figure}

\section{Rhombohedral and hexagonal lattice relationship}
\onehalfspacing
\begin{equation}
    \text{CDM}_{\text{hexagonal}}=\frac{2}{3}\left|\ \frac{a}{\sqrt{a^2+c^2}}-\frac{1}{\sqrt{2}} \right|\ +\frac{2}{3}\left|\ \frac{c}{\sqrt{a^2+c^2}}-\frac{1}{\sqrt{2}} \right|\ + \frac{2}{3}\left|\ \frac{1}{\sqrt{3}}-\frac{1}{\sqrt{2}} \right|\
    \label{eq:Mhex}
\end{equation}

\begin{equation}
    \text{CDM}_{\text{rhombohedral}}=2\left|\ \frac{1}{\sqrt{2-2\cos\alpha}}-\frac{1}{\sqrt{2}} \right|\
    \label{eq:Mrhomb}
\end{equation}

\begin{equation}
\begin{split}
    a_{\text{rhombohedral}}&=\frac{1}{3}\sqrt{3a^2_{\text{hexagonal}} + c^2_{\text{hexagonal}}} \\
    \\
    \alpha_{\text{rhombohedral}}&=\arccos\left(1-\frac{9a_{\text{hexagonal}}^2}{6a_{\text{hexagonal}}^2+2c_{\text{hexagonal}}^2}\right)\times\frac{180\degree}{\pi}
    \end{split}
    \label{eq:HexToRhomb}
\end{equation}

\begin{figure}[H]
\centering
    \includegraphics[scale=0.75]{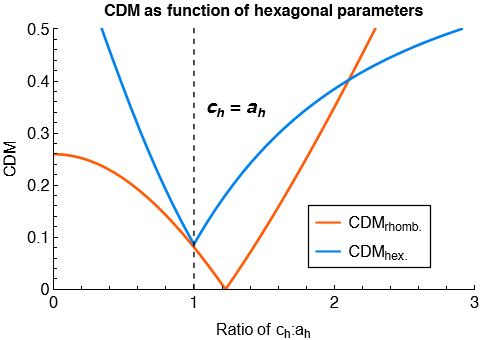}
    \caption{Plot showing the value of the metric for the rhombohedral and hexagonal setting as a function of the hexagonal lattice parameters. Observe that the value of $\text{CDM}_\text{rhomb.}$ is not minimized when $c_h=a_h$ even though $\text{CDM}_\text{hex.}$ is minimized.}
    \label{fig:MinRhombandHex}
\end{figure}

\section{New materials synthesized}\label{sec:QuentinData}
\subsection{Synthesis and characterization methods}
Materials in the family SmBa\textsubscript{1-x}K\textsubscript{x}CuBO\textsubscript{5} were synthesized from stoichiometric mixtures of ground H$_{3}$BO$_{3}$ (VWR, $99.5\%$), CuO (Thermoscientific, $99.995\%$), BaCO$_{3}$ (AlfaAesar, $99.6\%$), K$_{2}$CO$_{3}$ (VWR, $99\%$), Sm$_{2}$O$_{3}$ (NOAH, $99.9\%$). The ground powders were heated to $1070\,^{\circ}\mathrm{C}$ for 12 hours. The lattice parameters and phase purity of the materials were determined using room temperature powder x-ray diffraction (XRD) with Cu K$\alpha$ radiation. Susceptibility was measured using 50 oe applied field in a Quantum Design Magnetic Properties Measurement System (MPMS). No superconducting transition was observed for any of the materials synthesized here.

XRD and magnetization data for the materials in Table~\ref{table:QuentinData} are presented in the sections below.  The cubic deviation metrics for these materials are presented in Figure~\ref{Cuprates}. Additional data may be found in our data repository.

\begin{table}[hbt]
\begin{tabular}{llccccl}
\textbf{Composition}        & \textbf{a}       & \textbf{b}       & \textbf{c}       & \textbf{$\alpha,\beta,\gamma$} & \textbf{CDM} \\SmBa$_{0.95}$K$_{0.05}$CuBO$_5$           & 5.49667  & 5.49667          & 7.40250         & \ang{90}                          & 0.1378     \\
SmBa$_{0.9}$K$_{0.1}$CuBO$_5$             & 5.49375 & 5.49375          & 7.39864          & \ang{90}                          & 0.1378     \\
SmBa$_{0.8}$K$_{0.2}$CuBO$_5$             & 5.49209         & 5.49209         & 7.40083          & \ang{90}                          & 0.1380     \\ 
\end{tabular}
\caption{New materials synthesized in this work}
\label{table:QuentinData}
\end{table}

\subsection{SmBa\textsubscript{x}K\textsubscript{1-x}CuBO\textsubscript{5} results}
\begin{figure}[H]
\includegraphics[scale=0.8]{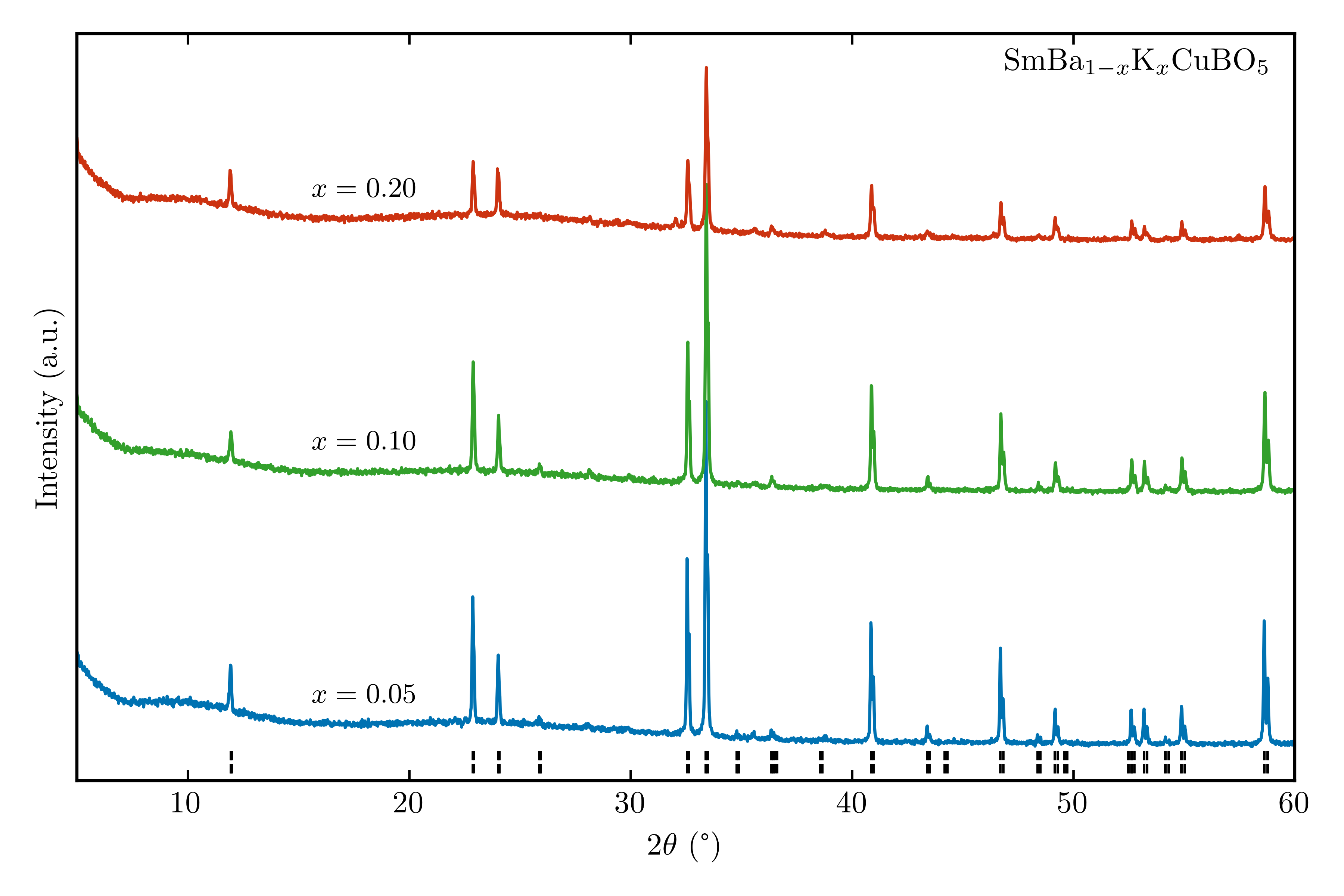}
    \caption{XRD for SmBa$_{1-x}$K$_{x}$CuBO$_{5}$ across the doping series x=0.05, 0.1, 0.2. Expected peak positions of SmBaCuBO$_{5}$ are denoted by vertical dashed lines.}
    \label{fig:SmBaCuBO5_XRD}
\end{figure}

\begin{figure}[H]
    \includegraphics[scale=0.65]{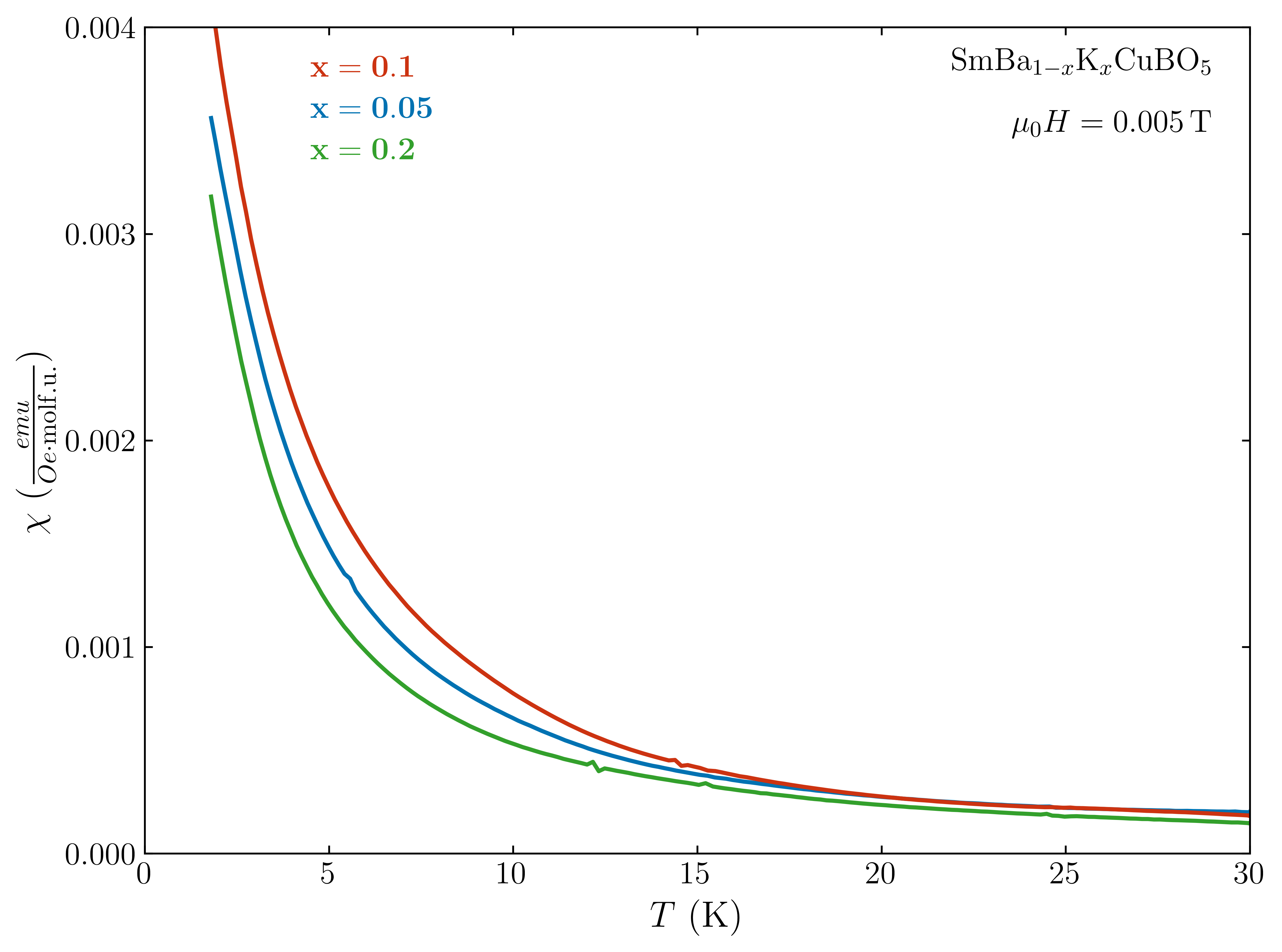}
    \caption{Susceptibility as a function of temperature for SmBa$_{1-x}$K$_{x}$CuBO$_{5}$ across the doping series x=0.05, 0.1, 0.2 with a H= 50 Oe applied field from 1.8-30K.}
    \label{fig:SmBaCuBO5_MPMS}
\end{figure}
\begin{acknowledgements}
We thank Ryan Baumbach for discussing an early case study for this manuscript. QS acknowledges the PARADIM REU program. We thank the Johns Hopkins University for support of this work.  
\end{acknowledgements}

\begin{funding}
QS acknowledges the PARADIM REU program funded by the NSF under award DMR‑2039380. DT acknowledges funding through the William H. Miller III Postdoctoral Fellowship in the Department of Physics and Astronomy, Johns Hopkins University. This work was funded by the U.S. Department of Energy (DOE), Office of Science (SC), National Quantum Information Science Research Centers, Co-Design Center for Quantum Advantage (C2QA) under contract number DE-SC0012704. The MPMS3 system used for magnetic characterization was funded by the National Science Foundation, Division of Materials Research, Major Research Instrumentation Program, under Award No. 1828490.
\end{funding}

\ConflictsOfInterest{The authors declare no conflicts of interest.
}

\DataAvailability{In addition to the Appendix, we release data and code here: \url{https://github.com/sbernierjhu/CDMdata}
}

\bibliography{references} 

\end{document}